\newcommand{\Comment}[1]{{}}
\definecolor{darkblue}{rgb}{0.15,0.35,0.55}
\definecolor{reddish}{rgb}{0.65, 0.2, 0.2}
\newcommand{\be}{\begin{equation}}
\newcommand{\ee}{\end{equation}}
\newcommand{\nn}{\nonumber}
\newcommand{\intM}{\int_{\mathcal{M}} }
\numberwithin{equation}{section}
\def\thickhline{%
  \noalign{\ifnum0=`}\fi\hrule \@height \thickarrayrulewidth \futurelet
   \reserved@a\@xthickhline}
\def\@xthickhline{\ifx\reserved@a\thickhline
               \vskip\doublerulesep
               \vskip-\thickarrayrulewidth
             \fi
      \ifnum0=`{\fi}}
\newlength{\thickarrayrulewidth}
\newtheorem*{theorem}{Theorem}
\begin{document}

\renewcommand{\thefootnote}{\fnsymbol{footnote}}
~
\vspace{1.75truecm}
\thispagestyle{empty}
\begin{center}
{\LARGE \bf{
Bootstrap Bounds on Closed Hyperbolic Manifolds 
}}
\end{center} 

\vspace{1cm}
\centerline{\Large James Bonifacio\footnote{\href{mailto:james.j.bonifacio@gmail.com}{\texttt{james.j.bonifacio@gmail.com}}}
}
\vspace{.5cm}

\centerline{{\it Department of Applied Mathematics and Theoretical Physics,}}
 \centerline{{\it   University of Cambridge, Wilberforce Road, Cambridge CB3 0WA, U.K.}} 
 \vspace{.25cm}
 
\vspace{1cm}
\begin{abstract}
\noindent
The eigenvalues of the Laplace--Beltrami operator and the integrals of products of eigenfunctions must satisfy certain consistency conditions on compact Riemannian manifolds. These consistency conditions are derived by using spectral decompositions to write quadruple overlap integrals in terms of products of triple overlap integrals in multiple ways. In this paper, we show how these consistency conditions imply bounds on the Laplacian eigenvalues and triple overlap integrals of closed hyperbolic manifolds, in analogy to the conformal bootstrap bounds on conformal field theories. We find an upper bound on the gap between two consecutive nonzero eigenvalues of the Laplace--Beltrami operator in terms of the smaller eigenvalue, an upper bound on the smallest eigenvalue of the rough Laplacian on symmetric, transverse-traceless, rank-2 tensors, and bounds on integrals of products of eigenfunctions and eigentensors. Our strongest bounds involve  numerically solving semidefinite programs and are presented as exclusion plots. We also prove the analytic bound $\lambda_{i+1} \leq 1/2+3 \lambda_i+\sqrt{\lambda_i^2+2 \lambda_i+1/4}$  for consecutive nonzero eigenvalues of the Laplace--Beltrami operator on closed orientable hyperbolic surfaces. We give examples of genus-2 surfaces that nearly saturate some of these bounds. 
To derive the consistency conditions, we make use of a transverse-traceless decomposition for symmetric tensors of arbitrary rank.
\end{abstract}

\newpage

\setcounter{tocdepth}{2}
\tableofcontents
\renewcommand*{\thefootnote}{\arabic{footnote}}
\setcounter{footnote}{0}


\section{Introduction}
Hyperbolic manifolds are important in many areas of mathematics and physics, including number theory, low-dimensional topology, dynamical systems, and string theory. A hyperbolic manifold $(\mathcal{M}, \hat{g})$ is a Riemannian manifold of constant sectional curvature $-1$, so its Riemann curvature tensor can be written in terms of its metric as
\be \label{eq:constant-curvature}
R_{mnpq} = \kappa \left( \hat{g}_{mp} \hat{g}_{nq} - \hat{g}_{mq} \hat{g}_{np} \right),
\ee
where $\kappa=-1$. 
In this paper, we study bootstrap bounds on Laplacian eigenvalues and integrals of products of eigenfunctions on hyperbolic manifolds that are closed, i.e., compact without boundary.\footnote{The manifolds in this work are always assumed to be smooth, connected, and orientable.}
Bootstrap bounds are bounds that arise from certain consistency conditions that eigenvalues and overlap integrals must satisfy, which are analogous to the consistency conditions used in the conformal bootstrap for conformal field theories (CFTs)  \cite{Rattazzi:2008pe, Rychkov:2009ij, Caracciolo:2009bx, Poland:2011ey, Kos:2014bka, Poland:2018epd}. Such bounds were studied for general closed Einstein manifolds in Ref.~\cite{Bonifacio:2020xoc}. By restricting to hyperbolic manifolds, we can obtain additional consistency conditions and stronger bounds.

To give a simple example of the consistency conditions we study, consider the Laplace--Beltrami operator $\Delta$ on  $(\mathcal{M}, \hat{g})$ and its eigenfunctions $\phi_i$ with eigenvalues $\lambda_i$,
\be
\Delta \phi_i = \lambda_i \phi_i, \quad i \in \mathbb{Z}_{\geq 0},
\ee
where we take the eigenfunctions to be real and orthonormal. The eigenvalues are ordered by increasing magnitude and with multiplicity,
$
\lambda_0=0< \lambda_1 \leq \lambda_2 \leq \dots \rightarrow \infty.
$
The eigenfunctions form a basis for the space of square-integrable functions on $(\mathcal{M}, \hat{g})$, so we can expand the product of any two eigenfunctions as a sum over eigenfunctions,
\be \label{eq:eigenExpansion}
\phi_{i} \phi_{j} = \sum _{k=0}^{\infty} c_{i j k} \phi_k.
\ee
The coefficients in this spectral decomposition are the integrals of products of three eigenfunctions,
\be
c_{i j k} \coloneqq \int_{\mathcal{M}} dV  \,\phi_i \phi_j \phi_k ,
\ee
where $d V $ is the Riemannian volume form. Given the integral of any product of eigenfunctions, we can write it in terms of the triple overlap integrals $c_{ijk}$ by repeatedly using the eigenfunction expansion \eqref{eq:eigenExpansion}. For example, we can expand a quadruple overlap integral as
\be
\int_{\mathcal{M}} dV  \phi_i^2 \phi_j^2 = \sum_{k=0}^{\infty} c_{i j k}^2,
\ee
which is just an instance of Parseval's identity.
However, this is not the only way to write this integral in terms of triple overlap integrals. By expanding different pairs of eigenfunctions, we get
\be
\int_{\mathcal{M}} dV  \phi_{i}^2 \phi_{j}^2 = \sum_{k=0}^{\infty} c_{i i k} c_{j j k} .
\ee
Since these are two ways of evaluating the same integral, we obtain the consistency condition
\be 
\sum_{k=0}^{\infty} \left( c_{i j k}^2 -c_{i i k} c_{j j k}\right) =0,
\ee
which holds on any closed Riemannian manifold.

By considering more complicated integrals involving derivatives of eigenfunctions and using spectral decompositions for tensor fields, we can obtain additional consistency conditions involving Laplacian eigenvalues and integrals of products of eigenfunctions and eigentensors. In this paper, we consider consistency conditions coming from quadruple overlap integrals involving a single fixed scalar eigenfunction. There are infinitely many consistency conditions of this form on closed hyperbolic manifolds, whereas there are finitely many  for general closed Einstein manifolds \cite{Bonifacio:2019ioc, Bonifacio:2020xoc}. We consider integrals with up to 16 derivatives. 

Given a set of consistency conditions, we can derive bounds on eigenvalues and triple overlap integrals by following the methods of the conformal bootstrap \cite{Rattazzi:2008pe, Rychkov:2009ij, Caracciolo:2009bx, Poland:2011ey, Kos:2014bka, Poland:2018epd}, as explored in Ref.~\cite{Bonifacio:2020xoc}. An example of the type of bound we can obtain is shown in Fig.~\ref{fig:eigenvalue-bounds-0}. This plot shows an upper bound on $\lambda_2$ in terms of $\lambda_1$ for closed hyperbolic surfaces, together with the eigenvalues from a particular family of genus-2 surfaces that nearly saturate the bound. This bound is derived using numerical semidefinite programming methods without rigorous error estimates.
We also prove the following analytic version of this bound,  which is slightly weaker and shown by the dashed line in Fig.~\ref{fig:eigenvalue-bounds-0}:
\be \label{eq:analytic-intro}
\lambda_{2} \leq 1/2+3 \lambda_1+\sqrt{\lambda_1^2+2 \lambda_1+1/4}.
\ee
Both this analytic bound and the numerical one apply more generally to any pair of consecutive nonzero eigenvalues.
We also present numerical upper bounds on the following quantities:
\begin{enumerate}
\item Eigenvalue gaps for closed hyperbolic manifolds with more than two dimensions.
\item The smallest eigenvalue of the rough Laplacian on symmetric, transverse-traceless, rank-2 tensors.
\item The magnitude of the triple overlap integral of the lightest non-constant eigenfunction, normalised by the volume $V$ of the manifold.
\item The magnitudes of triple overlap integrals involving the lightest non-constant eigenfunction and holomorphic $s$-differentials in two dimensions, for $s=2$ and $s=4$, normalised by $V$.
\end{enumerate}
Our numerical bounds are presented as exclusion plots, and the data points used to produce these plots are available in an ancillary notebook. We expect that these numerical bounds could be made mathematically rigorous with additional effort.

\begin{figure}[h!t]
\begin{center}
\epsfig{file=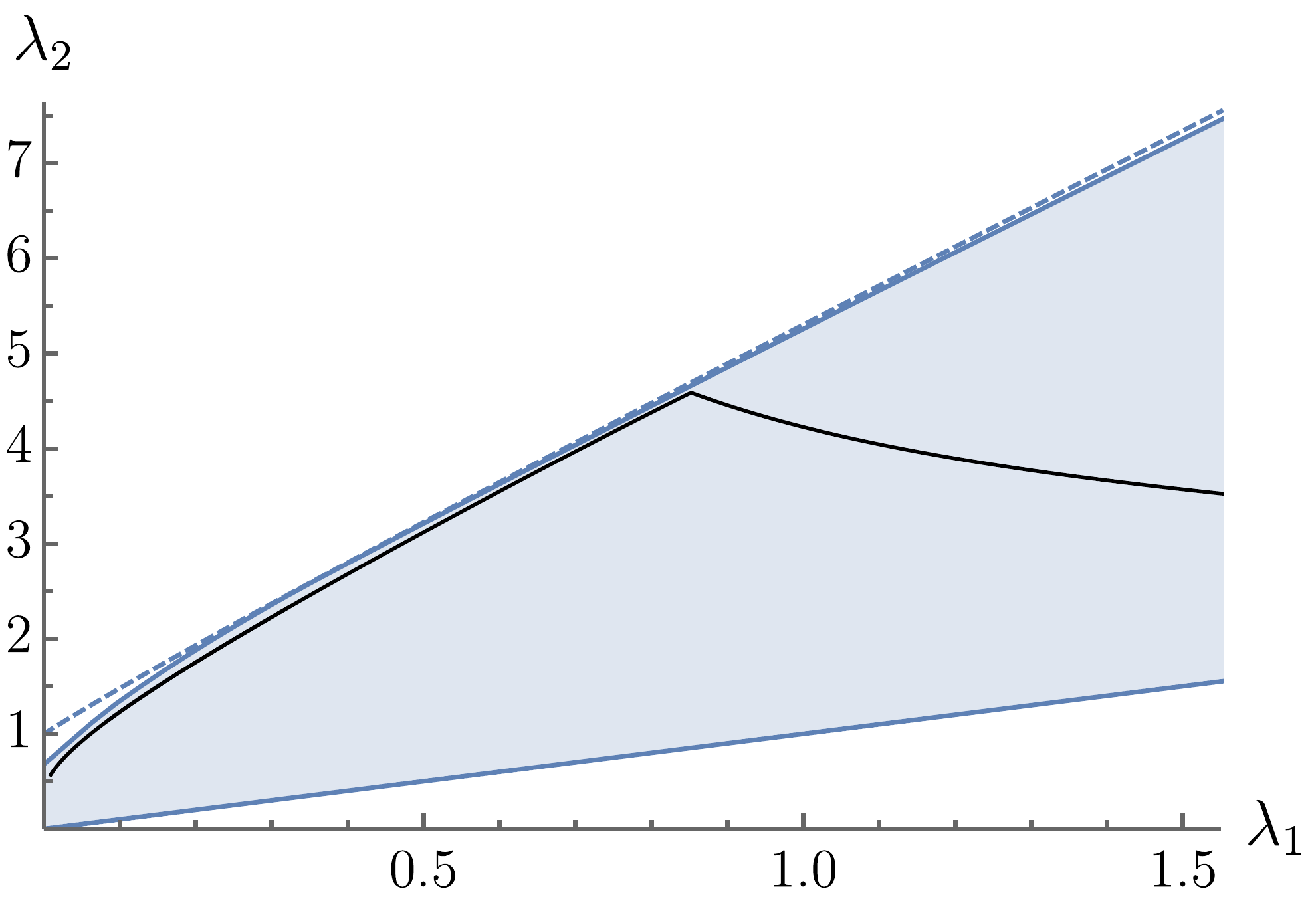, width=10cm}
\caption{An upper bound on the second positive eigenvalue of the Laplace--Beltrami operator on closed hyperbolic surfaces in terms of the first positive eigenvalue. The blue region is allowed by the numerical bound, where the lower bound is $\lambda_2 \geq \lambda_1$. The dashed line is the analytic bound \eqref{eq:analytic-intro}. The black line corresponds to the eigenvalues of the family of genus-2 surfaces $\rho(\ell)$, defined in Eq.~\eqref{eq:moduli-path}, for $\ell \in[15/4,  20]$, which were computed using the program \texttt{Hypermodes} by  Strohmaier and Uski \cite{hypermodes, Strohmaier_2012}. The same bounds apply to any pair of consecutive nonzero eigenvalues of the Laplace--Beltrami operator on a  closed hyperbolic surface.}
\label{fig:eigenvalue-bounds-0}
\end{center}
\end{figure}

The outline of the rest of this paper is as follows: in Section \ref{sec:spectralIntro}, we review some results about closed hyperbolic manifolds and their spectral theory. In Section \ref{sec:ccs}, we discuss the decomposition of symmetric tensors into symmetric, transverse-traceless eigentensors of the rough Laplacian and explain how to derive consistency conditions. In Section \ref{sec:eigenvalue-bounds}, we present bounds on eigenvalues and in Section \ref{sec:overlap-bounds} we present bounds on triple overlap integrals. We conclude in Section \ref{sec:discussion}. In Appendix \ref{app:ttDecomposition}, we prove the existence of a transverse-traceless decomposition for smooth, symmetric tensor fields of arbitrary rank on a closed Riemannian manifold, following Refs.~\cite{Pestov-Sharafutdinov, Dairbekov2011, York1973, York1974}.

\section{Spectral theory of closed hyperbolic manifolds}
\label{sec:spectralIntro}
In this section, we briefly review some results about closed hyperbolic manifolds and their spectral theory, focusing on two and three dimensions and the eigenvalues of the Laplace--Beltrami operator.
Further details and background material can be found in Refs.~\cite{buser1992, Besse, marden_2016}.

\subsection{Hyperbolic surfaces}
A closed orientable surface is classified topologically by its genus $g \in \mathbb{Z}_{\geq0}$. Surfaces with $g \geq 2$ admit hyperbolic metrics and, by the uniformisation theorem, any metric on such a surface is conformally equivalent to a unique hyperbolic metric. There is also a one-to-one correspondence between hyperbolic structures and complex structures, so we can think of a closed hyperbolic surface as a closed Riemann surface and vice versa. Every closed hyperbolic surface can be written as a quotient $ \mathbb{H}^2/\Gamma$, where $\mathbb{H}^2$ is the upper-half plane model of two-dimensional hyperbolic space and $\Gamma$ is a freely acting discrete subgroup of ${\rm PSL}(2, \mathbb{R})$, the group of orientation-preserving isometries of $\mathbb{H}^2$. A hyperbolic surface of genus $g$ has volume $4 \pi (g-1)$ by the Gauss--Bonnet theorem.

The moduli space of hyperbolic structures on a genus-$g$ surface is the space of hyperbolic metrics on the surface modulo diffeomorphisms. Its universal cover is a
$(6g-6)$-dimensional space called Teichm\"uller space, which is the space of hyperbolic metrics on the surface modulo diffeomorphisms that are isotopic to the identity.
Moduli space and Teichm\"uller space are related by the mapping class group.
Every hyperbolic surface of genus $g$ can be obtained by glueing together $2g-2$ hyperbolic pairs of pants along their $3g-3$ pairs of geodesic boundaries. Each such gluing is completely specified by two real numbers, $\ell>0$ and $\tau$, describing the boundary lengths and their relative twisting---see Fig.~\ref{fig:genus-2}. We can therefore specify a hyperbolic surface by the $6g-6$ parameters $\ell_i$, $\tau_i$, $i=1, \dots, 3g-3$ of such a pants decomposition, although such a decomposition is not unique. These parameters are called Fenchel--Nielsen coordinates, and they give a parametrisation of Teichm\"uller space. 

There are many results concerning the spectrum of the Laplace--Beltrami operator on closed orientable hyperbolic surfaces. The eigenvalue 1/4 plays a special role in this subject---it is the bottom of the spectrum on $\mathbb{H}^2$ and eigenvalues smaller than 1/4 are called small.
The number of small eigenvalues is bounded above in terms of the genus $g$ since $\lambda_{2g-2}>1/4$ \cite{Otal-2009}. On the other hand, for any $g\geq 2$ and $\epsilon>0$, there are genus-$g$ surfaces with arbitrarily many eigenvalues smaller than $1/4+ \epsilon$ \cite{Buser1977}. 
The spectral gap $\lambda_1$ is bounded above by a constant that tends to $1/4$ as $g \rightarrow \infty $ \cite{Huber1974}. There is no lower bound on $\lambda_1$, and in fact $\lambda_{2g-3}$ can be made arbitrarily small \cite{Buser1977}. 
The Yang--Yau bound for the first three nonzero eigenvalues gives \cite{YangYau1980}
\be
\frac{1}{\lambda_1}+\frac{1}{\lambda_2}+\frac{1}{\lambda_3} \geq \frac{3(g-1)}{2 \lfloor (g+3)/2 \rfloor},
\ee
where $\lfloor \cdot \rfloor$ is the floor function and we have used an improvement pointed out in Ref.~\cite{ElSoufi83}. This implies an upper bound on $\lambda_1$,
\be \label{eq:YangYau-1}
\lambda_1 \leq \frac{ 2 \lfloor (g+3)/2 \rfloor}{g-1} \leq 4,
\ee
and also an upper bound on $\lambda_2$ for sufficiently large $\lambda_1$, 
\be  \label{eq:YangYau-2}
\lambda_1 > \frac{2\lfloor(g+3)/2 \rfloor}{3(g-1)} \implies \lambda_2 \leq \left[ \frac{3(g-1)}{4 \lfloor(g+3)/2 \rfloor}- \frac{1}{2\lambda_1}\right]^{-1}.
\ee
For a hyperbolic surface with $g=2$, the largest value of $\lambda_1$ is conjectured to be that of the Bolza surface \cite{Strohmaier_2012}, which has $\lambda_1 \approx 3.839$.
This surface is called the Hadamard--Gutzwiller model in the quantum chaos literature \cite{Aurich1989}.
An improved upper bound on $\lambda_1$ for $g= 3$ was found recently \cite{ros2021}, namely $\lambda_1 \leq 2(4-\sqrt{7})$, and improved upper bounds on $\lambda_1$ for almost all other genera were found in Ref.~\cite{karpukhin2021}. 
A generalisation of Eq.~\eqref{eq:YangYau-1} for general eigenvalues was given in Ref.~\cite{karpukhin2020}, 
\be
\lambda_k \leq  \frac{ 2 k\lfloor (g+3)/2 \rfloor}{g-1}.
\ee

There are also interesting recent results concerning the size of $\lambda_1$ on generic closed hyperbolic surfaces of genus $g$ as $g \rightarrow \infty$. It was shown that generically $\lambda_1 > 3/16- \epsilon$  for any $\epsilon>0$ as $g \rightarrow \infty$ \cite{lipnowski2021, wu2021}, where generic here means with probability tending to one using the normalised Weil--Petersson measure on moduli space. This builds on Mirzakhani's result that generically $\lambda_1 > 0.0024$ as $g \rightarrow \infty$ \cite{Mirzakhani2013}. It is conjectured that this $3/16$ can be replaced by $1/4$ \cite{Wright2020}. There is a similar result for the eigenvalues of random covers of a surface \cite{magee2020}.
 
\begin{figure}
\centering
\hspace*{.1cm}{\resizebox{8cm}{!}{%
\begin{tikzpicture}
\node at (0,0) {\includegraphics[width=15cm]{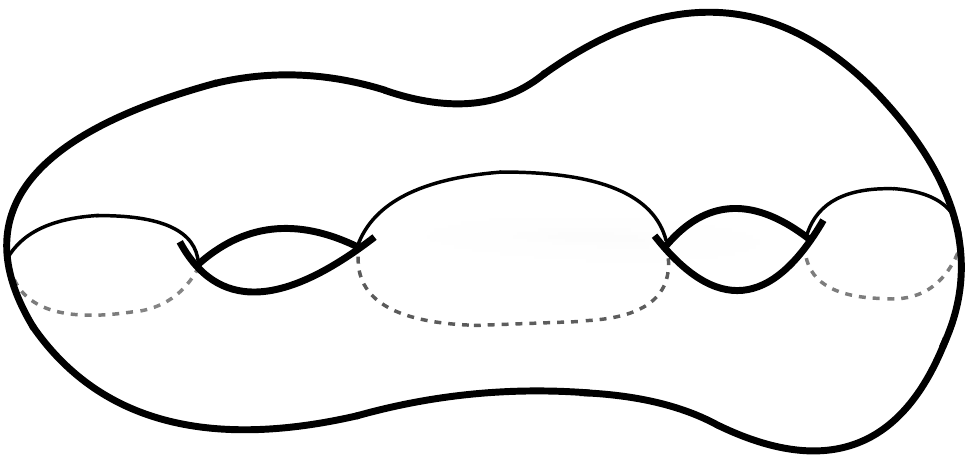}};
\node[scale=1.8] at (-5.7,.75) {$\ell_1, \, \tau_1$};
\node[scale=1.8] at (0.4,1.4) {$\ell_2, \, \tau_2$};
\node[scale=1.8] at (5.9,1.15) {$\ell_3, \, \tau_3$};
\end{tikzpicture}
}}
\caption{Fenchel--Nielson coordinates for a genus-2 surface.}
\label{fig:genus-2}
\end{figure}

\subsection{Hyperbolic 3-manifolds}
The classification of closed 3-manifolds was pioneered by Thurston with his geometrisation conjecture \cite{Thurston82}, which was proved by Perelman \cite{Perelman2002}. Of the eight geometries appearing in the geometrisation conjecture, hyperbolic manifolds are the most complex and the only ones yet to be fully classified. 
Unlike in two dimensions, a closed hyperbolic 3-manifold has a unique hyperbolic metric. In fact, closed hyperbolic $d$-manifolds with isomorphic fundamental groups are isometric if $d \geq 3$ by Mostow rigidity \cite{Mostow68}. 
However, as in two dimensions, there is also a sense in which most 3-manifolds are hyperbolic. For example, any orientable closed 3-manifold can be obtained from $S^3$ by performing Dehn surgery along some link $L$, i.e., by excising a neighbourhood around $L$ and filling in a solid torus for each component of $L$, such that $S^3 \setminus L$ admits a complete hyperbolic metric, and most manifolds obtained from $S^3$ by performing Dehn surgery along $L$ admit a hyperbolic metric \cite{Thurston82}. 
Another important result is the virtual fibering conjecture proved by Agol \cite{Agol2013}, which implies that every closed hyperbolic 3-manifold has a finite cover that is a surface bundle over $S^1$.

We mention just a few spectral results; see, e.g., Ref.~\cite{Callahan-thesis} for a more detailed review.  Schoen showed that on a closed hyperbolic 3-manifold $\lambda_1 \geq \pi^2 /(3\times2^{34}V)^2$ \cite{Schoen1982}. This means that $\lambda_1$ cannot be  arbitrarily small for a fixed volume, although it can be arbitrarily small  \cite{Callahan-thesis}.
Explicit numerical results for the low-lying eigenvalues of certain closed hyperbolic 3-manifolds can be found in the cosmology literature \cite{Inoue:1998nz, Cornish1999, Inoue2001}; this includes the Weeks manifold, which has $\lambda_1 \approx 27.8$ and $\lambda_2 \approx 32.9$ \cite{Cornish1999}. The Weeks manifold has volume  $V \approx 0.9427$, the smallest of any closed orientable hyperbolic 3-manifold \cite{Gabai2007}, and can be obtained by performing Dehn surgery on the Whitehead link. 

\section{Tensor decompositions and consistency conditions}
\label{sec:ccs}

In this section, we explain how to derive consistency conditions for closed hyperbolic manifolds. The derivation requires introducing a transverse-traceless spectral decomposition for symmetric tensors of arbitrary rank. 

\subsection{Symmetric tensors}
Let $T^{(s)}$ and $T'^{(s)}$ be real, symmetric, rank-$s$ tensor fields on a closed $d$-dimensional Riemannian manifold $(\mathcal{M}, \hat{g})$. The tensor fields in this paper, which we refer to as tensors for simplicity, are always assumed to be smooth. The canonical global inner product on the space of such tensors is the following $L^2$ product:
\be \label{eq:innerProduct}
(T^{(s)}, T'^{(s)}) \coloneqq \intM dV \hat{g}^{m_1 n_1} \dots \hat{g}^{m_s n_s} T^{(s)}_{m_1 \dots m_s} T'^{(s)}_{n_1 \dots n_s}.
\ee
This inner product induces an $L^2$ norm on symmetric tensors.

\subsection{Transverse-traceless decomposition}
We want to decompose a symmetric tensor into symmetric, transverse-traceless tensors. 
The first step is to decompose a symmetric tensor $T^{(s)}$ into a symmetric, transverse-traceless tensor plus a trace term and the symmetrised traceless derivative of a lower-rank, symmetric, traceless tensor,
\be  \label{eq:ttDecompose}
T^{(s)}_{m_1 \dots m_s} = T^{(s)TT}_{m_1 \dots m_s} + \nabla_{(m_1} W^{(s-1)}_{m_2 \dots m_s)}- \frac{s-1}{2s+d-4}\hat{g}_{(m_1 m_2} \nabla^n W^{(s-1)}_{m_3 \dots m_s) n}+\hat{g}_{(m_1 m_2} \bar{T}^{(s-2)}_{m_3 \dots m_s)},
\ee
where $\nabla$ is the covariant derivative, $T^{(s)TT}$ is transverse and traceless, $ W^{(s-1)}$ is traceless, and $\bar{T}^{(s-2)}$ can be written in terms of traces of $T^{(s)}$. Note that we symmetrise with weight one.
This decomposition is possible on any closed Riemannian manifold, as discussed in Appendix \ref{app:ttDecomposition}. On a manifold of constant curvature, we can iteratively apply this decomposition to the trace and longitudinal pieces to get an orthogonal decomposition of an arbitrary symmetric tensor in terms of symmetric, transverse-traceless tensors,
\be
\label{eq:TensorDecomposition}
T^{(s)}_{m_1 \dots m_s} = \sum_{t=0}^s \sum_{a=0}^{\lfloor\frac{s-t}{2}\rfloor} \hat{g}_{(m_{1}  m_{2}} \dots \hat{g}_{m_{2a -1} m_{2a}}\left( \nabla_{m_{2a+1}} \cdots \nabla_{m_{s-t}} h^{(t)}_{a, m_{s-t+1} \dots m_s)}+ \dots \right),
\ee
where $h^{(t)}_{a}$ are symmetric, transverse-traceless, rank-$t$ tensors and the expression in parenthesis is fixed by adding terms to make it symmetric and traceless. More explicitly, the expression in parenthesis is $P_{s-2a-1} P_{s-2a-2} \dots P_t \, h^{(t)}_a$, where $P_n$ is the symmetrised traceless derivative defined in Eq.~\eqref{eq:PsDef}. 

As some examples of the decomposition \eqref{eq:TensorDecomposition}, for $s=1, 2, 3$ on a hyperbolic manifold we get
\begin{align}
T^{(1)}_{m_1} & = h^{(1)}_{0, m_1}+\nabla_{m_{1}} h^{(0)}_0 , \\
T^{(2)}_{m_1 m_2} & =h^{(2)}_{0, m_1 m_2}+\nabla_{(m_1} h^{(1)}_{0, m_2)} + \left(\nabla_{(m_1} \nabla_{m_2)} + \frac{1}{d}\hat{g}_{m_1m_2} \Delta\right) h^{(0)}_0 +  \hat{g}_{m_1 m_2} h^{(0)}_1, \\
T^{(3)}_{m_1 m_2 m_3} & =h^{(3)}_{0, m_1 m_2 m_3} + \nabla_{(m_1} h^{(2)}_{0, m_2 m_3)}  +\left( \nabla_{(m_1} \nabla_{m_2} h^{(1)}_{0, m_3)} +\frac{\Delta +d-1}{d+2}\hat{g}_{(m_1 m_2} h^{(1)}_{0, m_3)}\right)  + \hat{g}_{(m_1 m_2} h^{(1)}_{1, m_3)} \nonumber \\
&+  \left( \nabla_{(m_1} \nabla_{m_2} \nabla_{m_3)} h^{(0)}_0 + \frac{3 \Delta +2(d-1)}{d+2}\hat{g}_{(m_1 m_2} \nabla_{m_3)} h^{(0)}_0 \right)+\hat{g}_{(m_1 m_2} \nabla_{m_3)} h^{(0)}_1,
\end{align}
where $\Delta$ is the rough Laplacian defined below.
For $s=1$, this is the Hodge decomposition of a 1-form, which exists on any closed Riemannian manifold. For $s=2$, this is the transverse-traceless decomposition of a symmetric, rank-$2$ tensor \cite{York1973, York1974}, combined with the 1-form Hodge decomposition, which exists as an orthogonal decomposition on closed Einstein manifolds. It is useful for studying general Kaluza--Klein reductions of gravity \cite{Hinterbichler_2014}.

\subsection{Laplacians and eigenmodes}
From now on we assume that $(\mathcal{M}, \hat{g})$ is a closed hyperbolic manifold, unless otherwise stated. 
Consider the rough Laplacian on a symmetric, rank-$s$ tensor $T^{(s)}$ for any $s\in \mathbb{Z}_{\geq 0}$, defined as minus the divergence of the gradient,\footnote{For some applications, a more natural operator to consider is the Lichnerowicz Laplacian $\Delta_L$, which on a manifold of constant curvature $\kappa$ is given by
\be
\Delta_L T^{(s)}_{m_1 \dots m_s}= \Delta T^{(s)}_{m_1 \dots m_s} + \kappa s(s+d-2)T^{(s)}_{m_1 \dots m_s} -\kappa s(s-1) \hat{g}_{(m_1 m_2} T^{(s)}_{m_3 \dots m_{s}) n}{}^n.
\ee
The Lichnerowicz Laplacian commutes with several other standard operators defined on symmetric tensors. 
For symmetric, traceless tensors on constant curvature backgrounds, the Lichnerowicz Laplacian and rough Laplacian differ only by a constant, so their symmetric, traceless eigentensors are the same. 
}
\be
\Delta T^{(s)}_{m_1 \dots m_s}   \coloneqq - \nabla^n \nabla_{n} T^{(s)}_{m_1 \dots m_s}.
\ee
This operator is non-negative, essentially self-adjoint, and strongly elliptic. 
By standard results of spectral theory, the spectrum of $\Delta$ is discrete with a possible accumulation point only at infinity, each eigenspace is finite-dimensional, and the eigentensors are smooth and form a basis for the space of square-integrable, symmetric tensors. 

The rough Laplacian on functions is just the Laplace--Beltrami operator. As in the introduction, we denote its eigenvalues in non-decreasing order by $\lambda_i$ for $i\in \mathbb{Z}_{\geq 0}$, 
\be
0=\lambda_0 < \lambda_1 \leq \lambda_2\leq \lambda_3 \leq \dots,
\ee
where $\lambda_i \rightarrow \infty$ as $i \rightarrow \infty$ and eigenvalues are repeated according to their multiplicities. The associated real orthonormal eigenfunctions are denoted by $\phi^{(0)}_i$ or $\phi_{i}$. The unique zero mode is $\phi_0 = V^{-1/2}$, which we usually treat separately from the other eigenfunctions. 

Given a symmetric eigentensor of the rough Laplacian, we can decompose it into transverse-traceless tensors using Eq.~\eqref{eq:TensorDecomposition}. Using the constant curvature condition, each term in the decomposition separately satisfies an eigenvalue equation since the different terms remain orthogonal after acting with $\Delta$.
This means that we can restrict to transverse-traceless eigentensors without loss of generality. We denote the eigenvalues of the rough Laplacian on symmetric, transverse-traceless tensors of rank $s \geq 1$ for $d>2$ in non-decreasing order by $\lambda^{(s)}_i$ for $i\in \mathbb{Z}_{>0}$,
\be
\lambda^{(s)}_1 \leq \lambda^{(s)}_2\leq \lambda^{(s)}_3 \leq \dots,
\ee
where $\lambda^{(s)}_i \rightarrow \infty$ as $i \rightarrow \infty$ and eigenvalues are repeated according to multiplicity. We denote the associated real, orthonormal, symmetric, transverse-traceless, rank-$s$ eigentensors by $\phi^{(s)}_i$ for $i\in \mathbb{Z}_{>0}$,
\begin{align}
\Delta \phi^{(s)}_{i, m_1 \dots m_s}  = \lambda_{i}^{(s)} \phi^{(s)}_{i, m_1 \dots m_s}, \quad \nabla^{m_1} \phi^{(s)}_{i, m_1 \dots m_s}  = \hat{g}^{m_1 m_2} \phi^{(s)}_{i, m_1 \dots m_s} =0,  \quad  \left( \phi^{(s)}_{i},\phi^{(s)}_{j}\right) = \delta_{i j} \,.
\end{align}
On a two-dimensional manifold, there are only finitely many symmetric, transverse-traceless, rank-$s$ eigentensors for $s\geq2$, as discussed below; for these cases, we use the same notation for eigenvalues and eigentensors, but with the index running over a finite range.

Using the completeness of the eigentensors, we can expand any square-integrable, symmetric, transverse-traceless tensor in terms of $\phi_i^{(s)}$. Performing such an expansion on each $h^{(t)}_{a}$ in Eq.~\eqref{eq:TensorDecomposition}, we obtain the decomposition of a square-integrable, symmetric tensor in terms of symmetric, transverse-traceless eigentensors of the rough Laplacian,
\begin{align}
\label{eq:TensorEigenDecomposition}
T^{(s)}_{m_1 \dots m_s} & = \sum_{t=0}^s \sum_{a=0}^{\lfloor\frac{s-t}{2}\rfloor}  \sum_{i=1}^{\infty}C^{(t)}_{ a,i} \hat{g}_{(m_{1}  m_{2}} \dots \hat{g}_{m_{2a-1} m_{2a}}\left( \nabla_{m_{2a+1}} \cdots \nabla_{m_{s-t}}\phi^{(t)}_{i , m_{s-t+1} \dots m_s)}+ \dots \right) \nn \\
&+ C^{(0)}_{ s/2,0} \hat{g}_{(m_{1}  m_{2}} \dots \hat{g}_{m_{s-1} m_{s})}\phi_{0},
\end{align}
where the second line is only present when $s$ is even and $C^{(t)}_{a,i}$ are constants that can be determined by taking appropriate inner products of each side of this equation. The term in parenthesis is fixed by adding terms to make it symmetric and traceless, as in Eq.~\eqref{eq:TensorDecomposition}. For $d=2$, the index $i$ is only summed over a finite range for $t\geq 2$.

\subsection{Eigenvalue lower bounds}
The eigenvalues of the Laplace--Beltrami operator are non-negative, $\lambda_i \geq 0$. For $s >0$, we can obtain a nonzero lower bound on the eigenvalues of the rough Laplacian by considering the positivity of the following norm:
\be
\int_{\mathcal{M}} dV \left( \nabla_{[m_1} \phi^{(s)}_{i, m_2] m_3 \dots m_{s+1}} \right)^2 \geq 0,
\ee
where $\phi^{(s)}_{i}$ is a transverse-traceless eigentensor of the rough Laplacian with eigenvalue $\lambda_i^{(s)}$.
Evaluating this using integration by parts, we get the following lower bound, given in Ref.~\cite{Dyatlov2015}:
\be \label{eq:lowerbound}
\lambda_i^{(s)} \geq s+d-2, \quad s \geq 1.
\ee
This is analogous to the CFT unitary bound for symmetric, traceless tensors. The bound is saturated by symmetric, transverse-traceless eigentensors satisfying $\nabla_{[m_1} \phi^{(s)}_{i, m_2] m_3 \dots m_{s+1}}=0$. For $s=1$, these are harmonic 1-forms. If $(\mathcal{M}, \hat{g})$ can be immersed as a minimal hypersurface in a round sphere, then its second fundamental form is a rank-2 tensor satisfying this condition \cite{Simons68}.

\subsubsection{Holomorphic \texorpdfstring{$s$}{TEXT}%
-differentials}

When $d=2$, we can say more about higher-rank, transverse-traceless tensors. First we show that their eigenvalues must saturate the above lower bound, following an argument from Ref.~\cite{higuchi87}. Let $\phi^{(s)}_{i}$ be a symmetric, transverse-traceless eigentensor with $s\geq1$ in two dimensions. We can write 
\be \label{eq:2dtrick}
\nabla_{[m_1} \phi^{(s)}_{i, m_2] m_3 \dots m_{s+1}} = \epsilon_{m_1 m_2} \theta^{(s-1)}_{i, m_3 \dots m_{s+1}},
\ee
where $ \theta^{(s-1)}_{i}$ is a symmetric tensor and $\epsilon$ is the constant antisymmetric tensor.
For $s\geq2$, contracting the left-hand side with $\hat{g}^{m_2 m_3}$ gives zero since $\phi^{(s)}_{i}$ is transverse and traceless. This implies that $  \theta^{(s-1)}_{i}$ must vanish and so $\phi^{(s)}_{i}$ satisfies the condition for saturating the lower bound \eqref{eq:lowerbound}. Therefore, on a closed  hyperbolic surface the only transverse-traceless, rank-$s$ tensors with $s\geq 2$ are those with $\lambda_i^{(s)}= s$. For $s=1$, taking the divergence of Eq.~\eqref{eq:2dtrick} gives
\be
(\lambda_i^{(1)} -1) \phi^{(1)}_{i, m_1} = -2\epsilon_{m_1 m_2} \nabla^{m_2} \theta^{(0)}_{i}.
\ee
This shows that a transverse vector either has $\lambda_i^{(1)} = 1$, in which case it is a harmonic 1-form, or it has $\lambda_i^{(1)} >1$, in which case it can be written as the dual of the gradient of a non-constant scalar eigenfunction $\phi_j$ and $\lambda_i^{(1)}= \lambda_j+1$.

To better understand these transverse-traceless tensors in $d=2$, it is helpful to take a complex-analytic point of view (some useful physics references for this are Refs.~\cite{Alvarez:1982zi, dHokerPhong1986, dHokerPhong88}). In a neighbourhood of every point in two dimensions, there exist local isothermal coordinates in which the metric takes the form 
\be
ds^2 = e^{2 \sigma} \delta_{ij} dx^i dx^j, \quad i, j=1,2.
\ee
Defining complex coordinates $z= x^1 + ix^2$ and $\bar{z} = x^1 -i x^2$, the metric is 
\be
ds^2 = e^{2 \sigma}  dz d\bar{z} .
\ee
A symmetric, traceless, rank-$s$ tensor $T^{(s)}$ has only two independent components in two dimensions, which can be taken as
\begin{align}
T^{(s)}_{z \dots z} & = (T^{(s)}_{\bar{z} \dots \bar{z}})^* =\frac{1}{2} \left(T^{(s)}_{1 \dots 1 1} - iT^{(s)}_{1 \dots 1 2} \right), \\
T^{(s) z \dots z} & = \hat{g}^{z \bar{z}} \dots \hat{g}^{z \bar{z}} T^{(s)}_{\bar{z} \dots \bar{z}} = 2^{s-1} e^{-2 s \sigma} \left(T^{(s)}_{1 \dots 1 1} + iT^{(s)}_{1 \dots 1 2} \right),
\end{align}
which are called the negative- and positive-helicity components of $T^{(s)}$. 

A symmetric, traceless tensor in two dimensions is transverse if and only if its negative-helicity component is holomorphic, i.e., $\partial_{\bar{z}}T^{(s)}_{z \dots z} =0$, where $\partial_{\bar{z}} =(\partial_1 +i \partial_2)/2$. By the Riemann--Roch theorem, the vector space of holomorphic $s$-differentials with $s \geq 2$ on a closed surface of genus $g \geq 2$  has complex dimension $(2s-1)(g-1)$. Therefore, the vector space of symmetric, transverse-traceless, rank-$s$ tensors has real dimension $2(2s-1)(g-1)$ for $s\geq 2$.
For example,  there are $6(g-1)$ independent symmetric, transverse-traceless, rank-2 tensors, given by the real parts of holomorphic quadratic differentials, and these span the cotangent space of Teichm\"uller space. Holomorphic $s$-differentials with $s>2$ are important in higher Teichm\"uller theory.
The transverse vectors with $\lambda^{(1)}=1$ are harmonic 1-forms and correspond to the real parts of holomorphic differentials.
The vector space of holomorphic differentials on a genus-$g$ surface has complex dimension $g$, so there are $2g$ independent real transverse vectors with $\lambda^{(1)}=1$.

\subsubsection{Spherical manifolds}
\label{ssec:spherical}
Let us comment on manifolds of constant positive curvature. On a closed manifold of constant curvature $\kappa$, the norms of the symmetrised traceless derivatives of transverse-traceless eigentensors of the rough Laplacian take the form
\be \label{eq:harmonic-norms}
\int_{\mathcal{M}} dV \left(\nabla_{(m_{1}} \cdots \nabla_{m_{s'-s}} \phi^{(s)}_{i,  m_{s'-s+1} \dots m_{s'})}+ \dots \right)^2 = c_{s, s'}(d) \prod_{l=s}^{s'-1} \left[ \lambda_{i}^{(s)}- \kappa \left( l(l+d-1)-s \right)\right],
\ee
where $c_{s, s'}(d)$ is a rational function of $d$ that is positive for $d>1$. The only way these norms can all be non-negative when $\kappa >0$ is if the eigenvalues are restricted to specific discrete values, in which case the norms vanish for sufficiently large $s'$. In particular, the eigenvalues must be a subset of the eigenvalues of the round sphere,
\be
\kappa >0 \implies \lambda_{i}^{(s)} \in \{ \kappa \left( l(l+d-1)-s \right): \, l \in \mathbb{Z}, \, l \geq s\} ,
\ee
since otherwise the right-hand side of Eq.~\eqref{eq:harmonic-norms} would be negative for certain $s'$.  This also follows from the compactness of spheres and the Killing--Hopf theorem, which says that complete manifolds with constant positive curvature are quotients of the round sphere. 

\subsection{Triple overlap integrals}
It will be useful to introduce notation for certain triple overlap integrals involving eigenfunctions and eigentensors. For scalar eigenfunctions $\phi_{i}$ and $\phi_{j}$ and a  transverse-traceless, rank-$s$ eigentensor $\phi^{(s)}_{k}$, we define $c^{(s)}_{i j k} $ to be the following triple overlap integral:
\be
c^{(s)}_{i j k} \coloneqq \intM d V  \,\phi_{i} \nabla^{m_1} \dots \nabla^{m_s} \phi_{j} \phi^{(s)}_{k, m_1 \dots m_s},
\ee
which corresponds to the unique cubic interaction between two scalars and a spin-$s$ particle.
We often drop the superscript for the scalar overlap integrals, i.e.,  $c_{i j k} \coloneqq c^{(0)}_{i j k}$. In this paper, we only consider triple overlap integrals with $i=j$, which can  be non-vanishing only when $s$ is even. In two dimensions, these overlap integrals can be written in terms of the real parts of overlap integrals of eigenfunctions and holomorphic $s$-differentials,\footnote{The imaginary parts come from overlap integrals involving the constant antisymmetric tensor.}
\be \label{eq:overlap2D}
c^{(s)}_{i j k}= 2\intM d V \, \operatorname{Re} \left[ \phi_{i} \nabla^{z} \dots \nabla^{z} \phi_{j} \phi^{(s)}_{k, z \dots z} \right] ,
\ee
where $\nabla^z = \hat{g}^{z \bar{z}} \partial_{\bar{z}}$ when acting on a tensor of definite helicity.

These triple overlap integrals appear when expanding symmetrised products of derivatives of eigenfunctions using spectral decompositions, e.g.,
\be
\phi_{i}^2  = V^{-1}+\sum_{j=1}^{\infty} c_{iij} \phi_j , \quad \phi_i \partial_{m_1} \phi_i  =\frac{1}{2} \sum_{j=1}^{\infty}  c_{iij} \partial_{m_1} \phi_j \, .
\ee
Physically, we can think of these overlap integrals as cubic coupling constants since they determine the strengths of cubic interactions of Kaluza--Klein modes in theories with extra dimensions---see, for example, Refs.~\cite{Arefeva86, Kaloper:2000jb, DeLuca:2021pej} for some old and recent discussions of hyperbolic extra dimensions. 
Scalar overlap integrals on hyperbolic manifolds have also been studied in the mathematics literature due to their relation with integral representations of $L$-functions \cite{Sarnak94, Petridis95, Bernstein1999, Bernstein2006}. 

\subsection{Consistency conditions}
We can now introduce the consistency conditions that are the main tool used in this paper. Using the decomposition of a symmetric tensor in terms of transverse-traceless eigentensors of the rough Laplacian, we can reduce a quadruple overlap integral to a sum of products of triple overlap integrals. This reduction can be done in multiple ways, and equating the different expressions gives consistency conditions on the eigenvalues and triple overlap integrals. 
Fixing a single eigenfunction $\phi_i$, we consider identities of the following form:
\begin{align}
&\intM dV \contraction{}{\nabla_{m_1} \dots \nabla_{(m_{s_1}} }{\phi_i  }{\nabla_{m_{s_1+1}}\dots \nabla_{m_{s_1}+m_{s_2}}\phi_i} 
\contraction{\nabla_{m_{1}} \dots \nabla_{m_{s_1}} \phi_i \nabla_{m_{s_1+1}} \dots \nabla_{m_{s_1+s_2}}\phi_i}{\nabla_{m_{s_1+s_2+1}} \dots \nabla_{m_{s_1+s_2+s_3})}}{}{\phi_i \nabla^{(m_1} \dots \nabla^{m_{s_1+s_2+s_3})}\phi_i}
\nabla_{(m_1} \dots \nabla_{m_{s_1}} \phi_i \nabla_{m_{s_1+1}} \dots \nabla_{m_{s_1+s_2}} \phi_i \nabla_{m_{s_1+s_2+1}} \dots \nabla_{m_{s_1+s_2+s_3})}\phi_i \nabla^{(m_1} \dots \nabla^{m_{s_1+s_2+s_3})} \phi_i
\nonumber \\
=&
\intM d V \contraction{}{\nabla_{m_1} \dots \nabla_{(m_{s_1}} }{\phi_i \nabla_{m_{s_1+1}}\dots \nabla_{m_{s_1}+m_{s_2}}\phi_i }{\nabla_{m_{s_1+s_2+1}} \dots \nabla_{m_{s_1+s_2+s_3})}} 
\contraction[2ex]{\nabla_{m_{1}} \dots \nabla_{m_{s_1}} \phi_i}{ \nabla_{m_{s_1+1}} \dots \nabla_{m_{s_1+s_2}}\phi_i}{\nabla_{m_{s_1+s_2+1}} \dots \nabla_{m_{s_1+s_2+s_3})}}{\phi_i \nabla^{(m_1} \dots \nabla^{m_{s_1+s_2+s_3})}\phi_i}
\nabla_{(m_1} \dots \nabla_{m_{s_1}} \phi_i \nabla_{m_{s_1+1}} \dots \nabla_{m_{s_1+s_2}} \phi_i \nabla_{m_{s_1+s_2+1}} \dots \nabla_{m_{s_1+s_2+s_3})}\phi_i \nabla^{(m_1} \dots \nabla^{m_{s_1+s_2+s_3})} \phi_i, \label{eq:wick-integral}
\end{align}
where $s_i$ are non-negative integers and the Wick contraction notation means that we decompose the indicated pair of tensors according to Eq.~\eqref{eq:TensorEigenDecomposition}.  
Each side of this equation can then be written as a sum over products of triple overlap integrals using orthonormality. 

For fixed non-negative integers $s_1$, $s_2$, and $s_3$, we can get up to two consistency conditions from Eq.~\eqref{eq:wick-integral}. We consider all independent consistency conditions of this form with $2(s_1+s_2+s_3)\leq \Lambda$, i.e., with at most $\Lambda$ derivatives, for some fixed even integer $\Lambda$.
In this paper we take $\Lambda= 16$.
 For $d>2$, the number of independent consistency conditions for $\Lambda=2, 4, \dots , 16$ is $1, 2, 3, 5, 7, 9, 12, 15$. For $d=2$, the number of independent consistency conditions equals $\Lambda/2$ for $\Lambda \leq 16$. 

The simplest example of a consistency condition with a single fixed eigenfunction comes from a two-derivative integrand. It is given by the following relation:
\be
\intM dV \,
\contraction{}{\partial_m \phi_{i}}{}{\partial^m \phi_{i} }
\contraction{\partial_m \phi_{i} \partial^m \phi_{i}}{\phi_{i}}{}{\phi_{i}}
\partial_m \phi_{i} \partial^m \phi_{i} \phi_{i} \phi_{i} 
=
\intM dV \,
\contraction{}{\partial_m \phi_{i}}{\partial^m \phi_{i} }{ \phi_{i}}
\contraction[8pt]{\partial_m \phi_{i}}{ \partial^m \phi_{i}}{\phi_{i}}{\phi_{i}}
\partial_m \phi_{i} \partial^m \phi_{i} \phi_{i} \phi_{i}.
\ee
After evaluating the contractions, this gives the consistency condition
\be
V^{-1} \lambda_{i} + \sum_{j=1}^{\infty} \left( \lambda_{i} -\frac{3}{4} \lambda_j \right) c_{i i j}^2 = 0,
\ee
which holds on any closed manifold. This consistency condition appears as a sum rule for scattering amplitudes in Yang--Mills theories dimensionally reduced on a general closed manifold \cite{Bonifacio:2019ioc}. The first instance of this that we are aware of is in Ref.~\cite{Csaki:2003dt}.

Suppose that we have found all of the consistency conditions up to some fixed $\Lambda$. 
We can put these consistency conditions in the following form:
\be \label{eq:sumRules}
V^{-1} \vec{F}_{0}(\lambda_i, 0) + \sum_{\substack{s=0 \\ s \, \text{even}}}^{\Lambda/2} \sum_{j=1}^{\infty} \frac{  \vec{F}_{s}(\lambda_i, \lambda_j^{(s)})}{\chi_{s, \Lambda} (\lambda^{(s)}_j)} \left(c^{(s)}_{i i j}\right)^2 =0,
\ee
where, in a given dimension, $\vec{F}_{s}$ are vectors whose components are polynomials of the eigenvalues  $\lambda_i$ and $\lambda^{(s)}_{j}$. The functions $\chi_{s, \Lambda}$ are polynomials defined by\be \label{eq:chiDef}
\chi_{s, \Lambda} (\lambda^{(s)}_{j}) \coloneqq \prod_{\substack{l=s \\ l \, \, {\rm odd}}}^{\Lambda/2-1} \left[ \lambda_{j}^{(s)}+  l(l+d-1)-s\right],
\ee
which are positive for eigenvalues satisfying the lower bound \eqref{eq:lowerbound}.\footnote{
Let us explain why only odd values of $l$ occur in Eq.~\eqref{eq:chiDef}.
Consider the following class of integrals on a manifold of constant curvature $\kappa$:
\be
I_{i, j}^{s, s_1, s_2} \coloneqq \int_{\mathcal{M}} d V \left(\nabla_{m_{1}} \cdots \nabla_{m_{s_1+s_2-s}} \phi^{(s)}_{j,  m_{s_1+s_2-s+1} \dots m_{s_1+s_2})}+ \dots \right) \nabla^{(m_1} \dots \nabla^{m_{s_1}} \phi_i \nabla^{m_{s_1+1}} \dots \nabla^{m_{s_1+s_2})} \phi_i .
\ee
When deriving the consistency conditions, we must divide such integrals by the norms in Eq.~\eqref{eq:harmonic-norms}, which is how the polynomials $\chi_{s, \Lambda}$ arise when $\kappa=-1$.
By integrating by parts and commuting derivatives, we can write 
\be
I_{i, j}^{s, s_1, s_2} = p_{s,s_1,s_2} \! \left(\lambda_i, \lambda^{(s)}_{j} \right) c^{(s)}_{i i j}, 
\ee
where $p_{s,s_1,s_2}$ is a polynomial depending also on $d$ and $\kappa$.
Suppose that $\kappa>0$, so we can write $\lambda_j^{(s)} = \kappa(l(l+d-1)-s)$ for some integer $l\geq s$. It follows from the discussion in Sec.~\ref{ssec:spherical} that $I_{i, j}^{s, s_1, s_2} $ vanishes for $s_1+s_2 \geq l+1$. If $l$ is odd, then $c^{(s)}_{i i j}$ vanishes by parity when $s$ is even. If $l \leq s_1+s_2-1$ is even, then $c^{(s)}_{i i j}$ can be nonzero, so $p_{s, s_1, s_2}$ must have a zero at the corresponding eigenvalue. By continuity in $\kappa$, this implies that for general $\kappa$ we must have
\be
 p_{s,s_1,s_2} \! \left(\lambda_i, \lambda^{(s)}_{j} \right)\propto \prod_{\substack{l=s \\ l \, \, {\rm even}}}^{s_1+s_2-1} \left[ \lambda_{j}^{(s)}- \kappa \left( l(l+d-1)-s\right)\right].
\ee
These zeros in $I_{i, j}^{s, s_1, s_2}  $ at even values of $l$ cancel the zeros that would otherwise occur in $\chi_{s, \Lambda}$ when $\kappa=-1$.} On a closed hyperbolic surface, there are only finitely many transverse-traceless, rank-$s$ tensors for $s\geq 2$, so in these cases the sum over $j$ in Eq.~\eqref{eq:sumRules} has a finite range. The explicit consistency conditions for $\Lambda =16$ are included in the ancillary file. 
We have verified that the $\kappa>0$ versions of these consistency conditions are satisfied by zonal spherical harmonics on round spheres of various dimensions, which is a nontrivial check of the $\vec{F}_0$ part of the consistency conditions.

These consistency conditions are closed under restricting to quotients of $\mathcal{M}$ by subgroups of its isometry group. This means that the bounds we derive also apply to such quotients, including certain unoriented manifolds, orbifolds, and manifolds with boundaries. Closed hyperbolic manifolds cannot have continuous isometries, but they can have discrete symmetries. For example, every closed genus-2 surface has a $\mathbb{Z}_2$ symmetry (a hyperelliptic involution). Hurwitz's automorphism theorem says that the maximum order of the symmetry group of a closed orientable hyperbolic surface of genus $g$ is $84(g-1)$.

To conclude this section, we note that for $\Lambda=16$ the vector of polynomials  $\vec{F}_{0}(\lambda_i, \lambda_j)$ satisfies the following condition:
\be \label{eq:3/16zero}
\vec{F}_0\left(\frac{3(d-1)^2}{16}, \frac{(d-1)^2}{4} \right) = 0.
\ee
We do not have a good explanation for this zero, but its effect will be apparent in some of our bounds since it can mark the transition between the existence or not of solutions to the semidefinite programming problems we consider.
In two dimensions, this zero occurs at the eigenvalues $3/16$ and $1/4$. Curiously, the eigenvalue $3/16$ appears in several places in the spectral theory of hyperbolic surfaces, including Selberg's $3/16$ theorem for congruence subgroups \cite{Selberg3/16} and the recent results of Refs.~\cite{lipnowski2021, wu2021, magee2020}. The eigenvalue $(d-1)^2/4$ is the bottom of the spectrum of $\mathbb{H}^d$. 

\section{Eigenvalue upper bounds}
\label{sec:eigenvalue-bounds}
In this section, we use the consistency conditions \eqref{eq:sumRules} to find bounds on the eigenvalues of closed hyperbolic manifolds. The strategy parallels that for finding bounds on scaling dimensions of CFTs \cite{Rattazzi:2008pe, Rychkov:2009ij}. We first derive an analytic bound for surfaces and then find various numerical bounds.

\subsection{An analytic bound for surfaces}
We start by deriving an analytic eigenvalue bound for closed hyperbolic surfaces. 
\begin{theorem}
Let $\lambda_i$ denote the $i$\textsuperscript{th} positive eigenvalue of the Laplace--Beltrami operator on a closed orientable hyperbolic surface. Then the following inequality holds:
 \be \label{eq:inequality}
 \lambda_{i+1}  \leq 1/2+3 \lambda_i+\sqrt{\lambda_i^2+2 \lambda_i+1/4}.
 \ee
\end{theorem}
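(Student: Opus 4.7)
The approach I would take is to construct a positive linear functional from the $\Lambda = 4$ consistency conditions whose action on the sum rule forces the claimed bound, in the spirit of the conformal bootstrap. Since we are on a closed hyperbolic surface, only two independent scalar consistency conditions exist at this order, and the rank-2 transverse-traceless contribution is restricted to a single fixed eigenvalue $\lambda^{(2)} = 2$ by Sec.~\ref{ssec:spherical}'s 2D analogue, so their combined contribution is controlled by the single non-negative number $\sum_k (c^{(2)}_{iik})^2$. Writing the two conditions schematically as
\begin{equation}
\alpha_a V^{-1} + \sum_{j \geq 1} B_a(\lambda_j) \, c_{iij}^2 + \gamma_a \sum_k (c^{(2)}_{iik})^2 = 0 , \quad a = 1, 2,
\end{equation}
where each $B_a$ is a polynomial in $\lambda_j$ of degree at most two with coefficients depending on $\lambda_i$, the strategy is to combine them into a single identity for which every term but one has a definite sign.

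First I would explicitly derive the two sum rules by evaluating the four-derivative integrals $\int \phi_i^2 |\nabla\phi_i|^2 \, dV$ and $\int \phi_i^2 (\nabla_m \nabla_n \phi_i)(\nabla^m \nabla^n \phi_i) \, dV$ in two different ways, contracting either adjacent or opposite pairs of factors and using the transverse-traceless decomposition \eqref{eq:TensorEigenDecomposition}. Next I would form a positive combination $\mu_1 \mathrm{SR}_1 + \mu_2 \mathrm{SR}_2$ and seek $(\mu_1, \mu_2)$ such that the scalar polynomial $P(\lambda_j) \coloneqq \mu_1 B_1(\lambda_j) + \mu_2 B_2(\lambda_j)$ is non-negative on the physical spectrum $\{\lambda_1, \ldots, \lambda_i\} \cup [\lambda_{i+1}, \infty)$, the rank-2 coefficient $\mu_1 \gamma_1 + \mu_2 \gamma_2$ is non-negative, and the identity coefficient $\mu_1 \alpha_1 + \mu_2 \alpha_2$ has the opposite sign to the rest. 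Any such functional contradicts the sum rule unless a constraint on $\lambda_{i+1}$ is satisfied, and the tightest functional is the one whose quadratic $P$ vanishes precisely at $\lambda_j = \lambda_i$ and $\lambda_j = \lambda_{i+1}$, i.e.\ $P(\lambda) = c(\lambda - \lambda_i)(\lambda - \lambda_{i+1})$ with $c > 0$.

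Imposing this extremal structure translates into an algebraic constraint on $\lambda_{i+1}$ in terms of $\lambda_i$, which I expect to take the factored form $(\lambda_{i+1} - 2\lambda_i)(\lambda_{i+1} - 4\lambda_i) \leq \lambda_{i+1} - \lambda_i$; solving this quadratic in $\lambda_{i+1}$ yields the stated inequality. The main obstacle will be verifying that the rank-2 positivity $\mu_1 \gamma_1 + \mu_2 \gamma_2 \geq 0$ is compatible with the scalar positivity requirements, since the fixed value $\lambda^{(2)} = 2$ in two dimensions makes the rank-2 sector behave as an extra rigid constraint rather than a free variable. If $\Lambda = 4$ turns out to be insufficient, one would need to include $\Lambda = 6$ conditions, but the simple analytic form of the bound and the correct asymptotic behavior $\lambda_{i+1} \sim 4\lambda_i$ as $\lambda_i \to \infty$ strongly suggest that $\Lambda = 4$ suffices.
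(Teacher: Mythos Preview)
Your bootstrap-functional strategy is exactly the one the paper uses, and remarkably your guessed inequality $(\lambda_{i+1}-2\lambda_i)(\lambda_{i+1}-4\lambda_i)\le \lambda_{i+1}-\lambda_i$ is equivalent to the stated bound. However, the proposal misidentifies the derivative order needed: the paper works at $\Lambda=6$, not $\Lambda=4$, and this is not incidental. The two sum rules the paper actually uses are
\[
Z_1 = V^{-1}\lambda_i + \sum_{j\ge 1}\Big(\lambda_i-\tfrac34\lambda_j\Big)c_{iij}^2=0,\qquad
Z_2 = \sum_{j\ge 1}\lambda_j\big(\lambda_j(\lambda_j-1)-6\lambda_j\lambda_i+\lambda_i(8\lambda_i+1)\big)c_{iij}^2=0,
\]
so the scalar polynomial entering the functional is \emph{cubic} in $\lambda_j$. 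Its quadratic factor $\lambda_j^2-(1+6\lambda_i)\lambda_j+8\lambda_i^2+\lambda_i$ has as its larger root precisely $1/2+3\lambda_i+\sqrt{\lambda_i^2+2\lambda_i+1/4}$, which is where the bound comes from. At $\Lambda=4$ you only have degree-two polynomials in $\lambda_j$ and a single free ratio $\mu_1/\mu_2$; this is not enough to force the scalar polynomial to vanish at both $\lambda_i$ and $\lambda_{i+1}$ while simultaneously controlling the volume and rank-2 signs, so the resulting bound would be weaker than the theorem.

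The second point you should take from the paper's proof is that the rank-2 sector never enters at all. By taking suitable combinations of the three $\Lambda=6$ conditions, the paper arranges that $Z_2$ has neither a volume term nor any holomorphic-quadratic-differential contribution; it is a pure scalar identity. This sidesteps your anticipated ``main obstacle'' entirely. The argument then reduces to choosing a single coefficient $\alpha>0$ (an explicit rational function of $\lambda_i$ and $\lambda_{i+1}$) so that the cubic $r_j=\lambda_i-\tfrac34\lambda_j+\alpha\,\lambda_j(\lambda_j^2-(1+6\lambda_i)\lambda_j+8\lambda_i^2+\lambda_i)$ is manifestly non-negative on $(0,\lambda_i]\cup[\lambda_{i+1},\infty)$ whenever $\lambda_{i+1}$ exceeds the bound; positivity on each piece is checked by an explicit change of variables that makes all coefficients non-negative. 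So: right idea, right target inequality, but you need one more derivative order and a cubic functional, and the rank-2 sector can (and should) be eliminated rather than bounded.
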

\begin{proof}
Fix a positive integer $i$. Consider the following two consistency conditions for closed hyperbolic surfaces:
\begin{align}
Z_1 & \coloneqq V^{-1} \lambda_{i} + \sum_{j=1}^{\infty} \left( \lambda_{i} -\frac{3}{4} \lambda_j \right) c_{ii j}^2=0, \\
Z_2 & \coloneqq \sum_{j=1}^{\infty} \lambda_j \left( \lambda_j (\lambda_j-1)-6\lambda_j \lambda_i + \lambda_i(8\lambda_i+1)\right) c_{iij}^2=0.
\end{align}
These follow from the three consistency conditions that exist for $\Lambda =6$, which were studied for general closed Einstein manifolds in Refs.~\cite{Bonifacio:2019ioc, Bonifacio:2020xoc}. We define 
\be \label{eq:x-def}
 x \coloneqq \lambda_{i+1} - \left( 1/2+3 \lambda_i+\sqrt{\lambda_i^2+2 \lambda_i+1/4} \right).
\ee
Now assume that $x>0$. We want to show that this is inconsistent with $Z_1=Z_2=0$. To reach a contradiction, we consider the combination $Z_1+ \alpha Z_2$, where $\alpha$ is defined as
\be
\alpha \coloneqq \frac{32 \lambda _{i+1} \lambda _i^3-24 \lambda _{i+1}^2 \lambda _i^2+4 \lambda _{i+1} \lambda _i^2+4 \lambda _{i+1}^3 \lambda _i-4 \lambda _{i+1}^2 \lambda _i-4 \lambda _i+3 \lambda _{i+1}}{4\lambda _{i+1} \left(8 \lambda _i^2-6 \lambda _{i+1} \lambda_i+\lambda _i+\lambda _{i+1}^2-\lambda _{i+1}\right)} .
\ee

First we establish that $\alpha >0$ by writing it in terms of $x$ and  $t \coloneqq \sqrt{4 \lambda _i \left(\lambda _i+2\right)+1}$,
\be
\alpha = \frac{10 \lambda _i+4 \lambda _i x \left(2 \lambda _i \left(2 \lambda _i+3 t+4\right)+x \left(6 \lambda _i+3 t+1\right)+t+2 x^2+1\right)+3 t+6 x+3}{4 x (t+x)
   \left(6 \lambda _i+t+2 x+1\right)},
\ee
which is manifestly positive since $\lambda_i>0$, $t>0$, and, by assumption, $x>0$.
Now let us write
\be \label{eq:analytic-step}
Z_1+ \alpha Z_2 =\frac{\lambda_{i}}{V} +\sum_{j=1}^{\infty} r_j c_{ii j}^2, 
\ee
where we have defined
\be
r_j \coloneqq  \lambda_{i} -\frac{3}{4} \lambda_j  +\alpha  \lambda_j \left( \lambda_j (\lambda_j-1)-6\lambda_j \lambda_i + \lambda_i(8\lambda_i+1)\right) .
\ee
We show that the right-hand side of Eq.~\eqref{eq:analytic-step} is positive for $x>0$, thus reaching a contradiction with $Z_1=Z_2=0$. The term $\lambda_i/V$ is positive since $\lambda_i>0$ and $V>0$. Since $c_{iij}^2 \geq 0$, it only remains to show that $r_j \geq0$ for each $j \in \mathbb{Z}_{>0}$. 

Let $j \in \mathbb{Z}_{>0}$. Consider first the case $j > i$. We define $ y_j \coloneqq \lambda_j - \lambda_{i+1}$, so that $y_j \geq 0$ for $j > i$.  We can then write $r_j$ for $j > i$ as
\begin{align}
r_j= & \alpha y_j^3 + \frac{1}{2} \alpha  \left(6 \lambda _i+3 t+6 x+1\right) y_j^2 + \frac{p_1}{4 x
   (t+x) \left(6 \lambda _i+t+2 x+1\right)}y_j  \nonumber \\
&+ \frac{1}{2} \lambda _i x \left(2 \lambda _i \left(2 \lambda _i+3 t+4\right)+x \left(6 \lambda _i+3 t+1\right)+t+2 x^2+1\right),
\end{align}
where
\begin{align}
p_1 & \coloneqq 56 \lambda _i^3+124 \lambda _i^2+38 \lambda _i+36 \lambda _i^2 t+26 \lambda_i t+60 \lambda_i t x^4+240 \lambda_i^2 t x^3+40 \lambda_i t x^3+288 \lambda_i^3 t
   x^2+216 \lambda_i^2 t x^2  \nonumber \\
&  +24 \lambda_i t x^2+18 t x^2+96 \lambda_i^4 t x+208 \lambda_i^3 t x+56 \lambda_i^2 t x+52 \lambda_i t x+12 t x+3 t+24 \lambda_i
   x^5+120 \lambda_i^2 x^4+20 \lambda_i x^4 \nonumber \\
& +352 \lambda_i^3 x^3+464 \lambda_i^2 x^3+56 \lambda_i x^3+576 \lambda_i^4 x^2+1248 \lambda_i^3 x^2+336 \lambda
   _1^2 x^2+72 \lambda_i x^2+12 x^3+12 x^2+320 \lambda_i^5 x\nonumber \\
& +800 \lambda_i^4 x+416 \lambda_i^3 x+168 \lambda_i^2 x+104 \lambda_i x+12 x+3.
\end{align}
This is manifestly non-negative since all coefficients and variables are non-negative. Similarly, we define $z_j \coloneqq \lambda_i/\lambda_j -1$, which satisfies $z_j \geq0$ for $1\leq j \leq i$. We can then write $r_j$ for $1\leq j \leq i$  as
\be
r_j= \frac{\lambda _i \left(12 \alpha  \lambda _i^2+\left(4 \alpha  \lambda _i \left(8 \lambda _i+1\right)+9\right) z_j^2+\left(4 \alpha  \lambda _i \left(10 \lambda
   _i+1\right)+6\right) z_j+4 z_j^3+1\right)}{4 \left(z_j+1\right){}^3},
\ee
which is also manifestly non-negative.
 We therefore reach a contradiction and hence $x \leq 0$.
 \end{proof}

\subsection{Numerical bounds}
Let us now generalise the above reasoning in a way that can easily incorporate many consistency conditions and can be optimised numerically on a computer, following the bootstrap logic \cite{Rattazzi:2008pe, Rychkov:2009ij, Caracciolo:2009bx, Poland:2011ey, Kos:2014bka, Poland:2018epd}.  If $n_{\Lambda}$ is the number of independent consistency conditions  in Eq.~\eqref{eq:sumRules}, we consider real linear combinations by contracting with a vector $\vec{\alpha} \in \mathbb{R}^{n_{\Lambda}}$,
\be \label{eq:sumRulesContracted}
V^{-1} \vec{\alpha} \cdot \vec{F}_{0}(\lambda_i, 0) + \sum_{\substack{s=0 \\ s \, \text{even}}}^{\Lambda/2} \sum_{j=1}^{\infty} \frac{ \vec{\alpha} \cdot \vec{F}_{s}(\lambda_i,\lambda_j^{(s)})}{\chi_{s, \Lambda} (\lambda^{(s)}_j)} \left(c^{(s)}_{i i j}\right)^2 =0.
\ee
The idea is to make an assumption about the spectrum and then try to find an $\vec{\alpha}$ such that Eq.~\eqref{eq:sumRulesContracted} gives a contradiction. If this is possible, then the  assumption about the spectrum is inconsistent. 
For example, to find an upper bound on $\lambda_2$ for a given value of $\lambda_1$, we take some candidate eigenvalues $(\lambda_1^*, \lambda_2^*)$ and try to find an $\vec{\alpha}$ satisfying the following conditions:
\begin{align}
\vec{\alpha} \cdot \vec{F}_{0}(\lambda_1^*, 0) & =1, \\
\vec{\alpha} \cdot \vec{F}_{0}(\lambda_1^*, x) & \geq 0, \quad \forall x \in \{\lambda_1^*\} \cup [\lambda_2^*, \infty), \\
\vec{\alpha} \cdot \vec{F}_{s}(\lambda_1^*, x) & \geq 0, \quad \forall x \geq s+d-2, \quad s=2, 4, \dots, \Lambda/2 \,,
\end{align} 
where the last condition implements the lower bound \eqref{eq:lowerbound}.
If such an $\vec{\alpha}$ can be found, then we have a contradiction with Eq.~\eqref{eq:sumRulesContracted} since the sum of a positive number with non-negative numbers cannot be zero, so we can conclude that $\lambda_2 < \lambda_2^*$ for $\lambda_1 = \lambda_1^*$.\footnote{We can also conclude the stronger statement that there exists some $j>1$ such that $\lambda_j < \lambda_2^*$ and $c_{11j} \neq 0$.} If no such $\vec{\alpha}$ can be found, then we conclude nothing. The best upper bound is obtained by finding the smallest $\lambda_2^*$ for which there exists such an $\vec{\alpha}$. 
If the first positive eigenvalue is degenerate, then this bound applies to the second distinct positive eigenvalue.

The problem of finding an $\vec{\alpha}$ satisfying the above conditions can be formulated as a semidefinite programming problem \cite{Poland:2011ey, Kos:2014bka}. The program \texttt{SDPB} is an arbitrary-precision semidefinite program solver that is designed to find numerical solutions to problems of precisely this form \cite{Simmons-Duffin:2015qma, Landry:2019qug}. 
For the above problem, we use \texttt{SDPB} to find the best upper bound, up to some small threshold, for a discrete set of values of $\lambda_1$, and then we interpolate between these points. The bound is presented by plotting this interpolation. We similarly find and plot numerical upper bounds for the smallest eigenvalue of the rough Laplacian on symmetric, transverse-traceless, rank-$2$ tensors. The output from \texttt{SDPB} can sometimes be used to find an exact solution and we can verify a bound analytically. However, other times the solution holds only up to a small threshold specified by the user (e.g., a nonzero duality gap), which can mean that a polynomial becomes negative for very large values of its argument. For such cases, we do not know how to bound the error in terms of the user-specified parameters of \texttt{SDPB}, although any error can in principle be made arbitrarily small. In practice,
we expect the error to be much smaller than the precision to which we quote the bounds, which can be checked by recomputing bounds using higher precision and smaller thresholds.

\subsubsection{Hyperbolic surfaces}
We first consider closed hyperbolic surfaces. The numerical and analytic upper bounds on $\lambda_2$ are shown in Fig.~\ref{fig:eigenvalue-bounds-0} above.
It is also interesting to compare the numerical bootstrap bound to some existing eigenvalue bounds by plotting the allowed values of $\lambda_1$ and $\lambda_2$ for surfaces of small genus. For surfaces of genus $g\in\{2,3,4\}$, Fig.~\ref{fig:genus-2-3-4-regions} shows the regions carved out by  the numerical bootstrap bounds and the bounds of Refs.~\cite{YangYau1980,ros2021,Otal-2009, karpukhin2020}, which were reviewed in Sec.~\ref{sec:spectralIntro}. The filled circles in Fig.~\ref{fig:genus-2-3-4-regions} correspond to the surfaces conjectured to maximise the first eigenvalue for these genera \cite{Strohmaier_2012, Cook-thesis}, namely the Bolza surface, the Klein quartic, and Bring's surface.
In both Figs.~\ref{fig:eigenvalue-bounds-0} and \ref{fig:genus-2-3-4-regions}, the black lines correspond to the eigenvalues of a particular family of genus-2 surfaces described in the next paragraph. The proximity of some of these examples to the numerical bootstrap bound shows that it is nearly optimal in these regions, which is a good indication that bootstrap methods work well for hyperbolic manifolds. These genus-2 eigenvalues were computed using the \texttt{Fortran} program \texttt{Hypermodes} by Strohmaier and Uski \cite{hypermodes}, which implements an algorithm for computing eigenvalues and eigenfunctions based on the method of particular solutions \cite{Strohmaier_2012} (see Ref.~\cite{strohmaier2016} for some introductory lectures).

\begin{figure}
\centering
\hspace*{.1cm}{\resizebox{10cm}{!}{%
\begin{tikzpicture}
\node at (0,0) {\includegraphics[width=12cm]{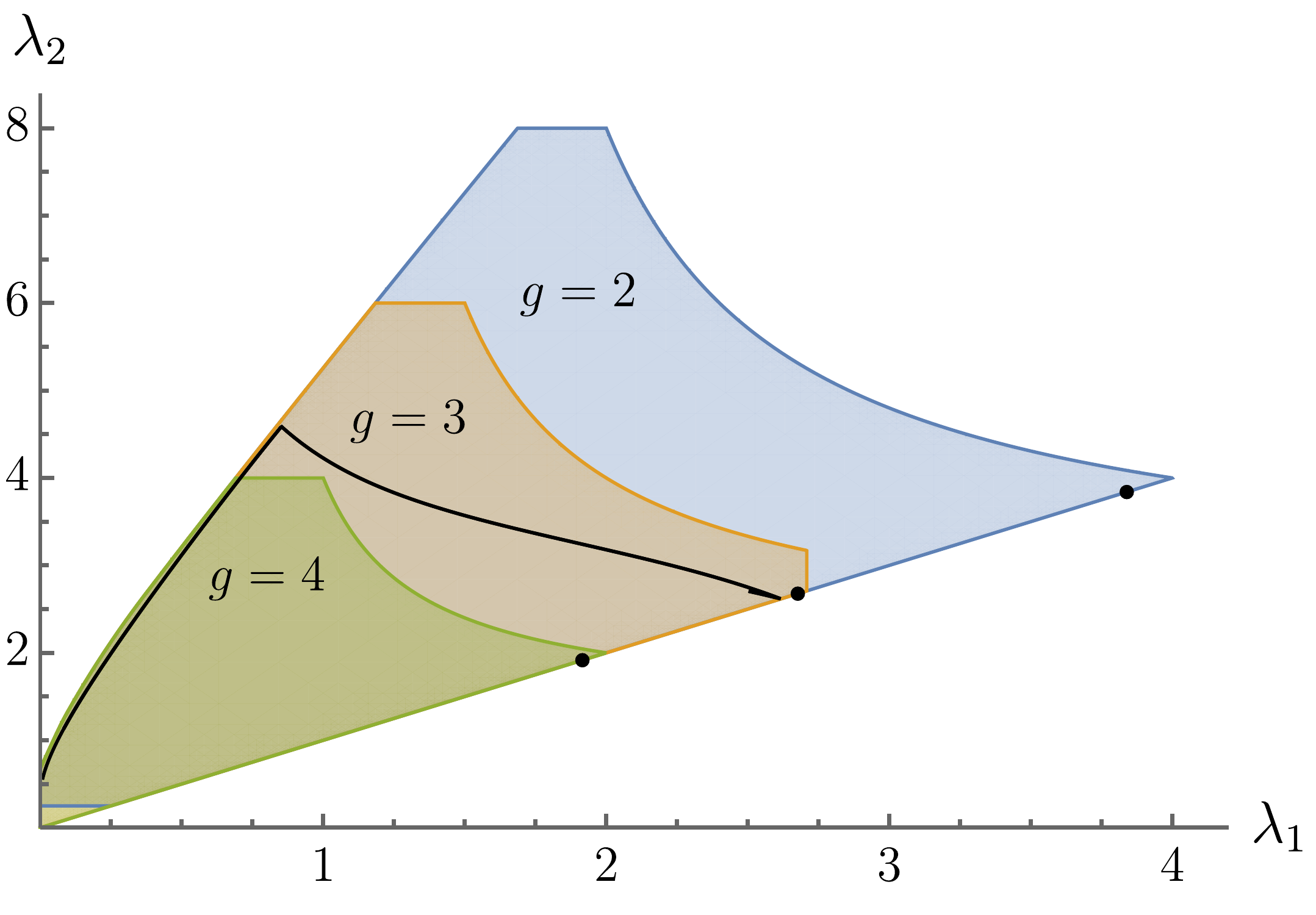}};
\draw[thick,->] (2,1.7) -- (1.2,1);
\node[draw] at (3.05,1.73) {Yang--Yau};
\draw[thick,->] (2.5,-1.55) -- (1.4,-1.05);
\node[draw] at (3,-1.55) {Ros};
\draw[thick,<-] (-.85,3.05) -- (-.85,3.4);
\node[draw] at (-.85,3.75) {Karpukhin et al.};
\draw[thick,->] (-3,2.3) -- (-1.95,2.3);
\node[draw] at (-4,2.3) {Bootstrap};
\end{tikzpicture}
}}
\caption{Allowed regions for the first two nonzero eigenvalues of the Laplace--Beltrami operator on closed hyperbolic surfaces of genus 2, 3, and 4. The regions are shaped by the bounds of Yang--Yau \cite{YangYau1980} (as given in Eqs.~\eqref{eq:YangYau-1} and \eqref{eq:YangYau-2}), Otal--Rosas \cite{Otal-2009}, Ros \cite{ros2021}, and Karpukhin et al. \cite{karpukhin2020}, together with the numerical bootstrap bound. The regions shrink as $g$ increases, except for the triangular region with $\lambda_2 \leq 1/4$, which is excluded by the Otal--Rosas bound only for $g=2$. The filled circles going from right to left correspond to the Bolza surface, the Klein quartic, and Bring's surface, which are the closed hyperbolic surfaces of genus 2, 3, and 4 with the largest symmetry groups. These surfaces are conjectured to have the largest values of $\lambda_1$ for their genera \cite{Strohmaier_2012, Cook-thesis}. The black line corresponds to the eigenvalues of the genus-2 surfaces along the path $\rho(\ell)$, defined in Eq.~\eqref{eq:moduli-path}, for $\ell \in[2,20]$, computed using \texttt{Hypermodes} \cite{hypermodes, Strohmaier_2012}.}
\label{fig:genus-2-3-4-regions}
\end{figure}

Let us now describe the family of genus-2 surfaces whose eigenvalues are shown in Figs.~\ref{fig:eigenvalue-bounds-0} and \ref{fig:genus-2-3-4-regions}. We write the Fenchel--Nielson coordinates of a genus-2 surface as $ \left(\ell_1, \tau_1; \ell_2,  \tau_2; \ell_3, \tau_3 \right) $, using the same conventions as Ref.~\cite{Strohmaier_2012}---see Fig.~\ref{fig:genus-2}. A full Dehn twist is $\tau_i =\pm 1$. Consider then the path through the Teichm\"uller space of closed genus-2 surfaces defined by 
\be \label{eq:moduli-path}
\rho: \mathbb{R}_{>0} \rightarrow (\mathbb{R}_{>0} \times \mathbb{R})^3, \quad \ell \mapsto  \left(\ell, 0; 2, \frac{1}{2}; 2, \frac{1}{2} \right).
\ee
The black lines in Figs.~\ref{fig:eigenvalue-bounds-0} and \ref{fig:genus-2-3-4-regions} correspond to the eigenvalues of surfaces along finite segments of this path. To give a clearer picture of the spectra of these surfaces, in Fig.~\ref{fig:eigenvalue-paths} we show the low-lying eigenvalues of $\rho(\ell)$ for $\ell \in[2, 8.5]$. Along real analytic paths through Teichm\"uller space, such as $\rho$, the eigenvalues are described by real analytic functions \cite{buser1992}. The curves of a given colour in Fig.~\ref{fig:eigenvalue-paths} are not analytic due to level crossings, but the underlying analytic curves can be discerned. We can thus determine the $\mathbb{Z}_2$ parity of the corresponding eigenfunctions by knowing their parity at a particular value of $\ell$. Similarly, we can check that no low-lying eigenvalues have been missed by verifying at one point along the curve that none have been missed, e.g., using one of the methods described in Ref.~\cite{strohmaier2016}, as we have done. We picked the path $\rho$ because it is easy to describe and it falls close to the bootstrap bounds, but it is possible to find genus-2 surfaces that are slightly closer to saturating the bound, even in a small neighbourhood of this path. 
\begin{figure}[h!t]
\begin{center}
\epsfig{file=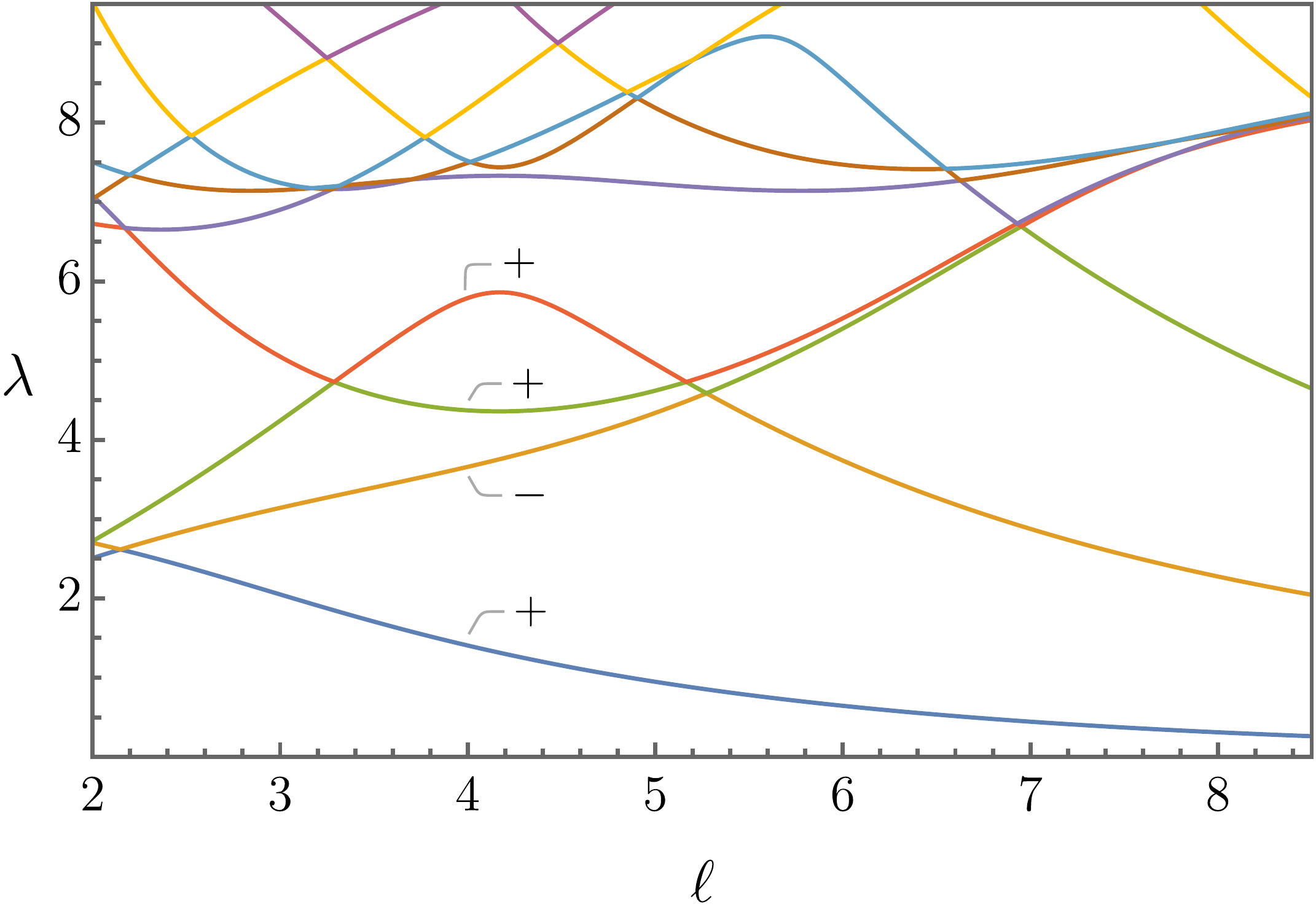, width=10cm}
\caption{Low-lying eigenvalues of the Laplace--Beltrami operator for the closed genus-2 surfaces along the path $\rho(\ell)$ in Teichm\"uller space, defined in Eq.~\eqref{eq:moduli-path}, for $\ell \in[2, 8.5]$, computed using \texttt{Hypermodes} \cite{hypermodes, Strohmaier_2012}. The $\pm$ labels indicate the $\mathbb{Z}_2$ parity of the first four nontrivial eigenfunctions at $\ell =4$.}
\label{fig:eigenvalue-paths}
\end{center}
\end{figure}

\subsubsection{Hyperbolic manifolds with \texorpdfstring{$d \geq 3$}{TEXT}%
}
We can find similar upper bounds for closed hyperbolic manifolds of dimension $d>2$. We show these bounds in Fig.~\ref{fig:eigenvalue-bounds-2} for $d=3,4,5,6$, together with the $d=2$ bound just discussed. For small $\lambda_1$, the lines exhibit a change of slope but have no visible kinks, which in the conformal bootstrap are often associated with interesting theories \cite{Rychkov:2009ij, ElShowk:2012ht}.\footnote{There are kinks in the first derivatives of the boundary curves for $d\leq 4$, but we do not know if this is significant.} 
The explicit eigenvalues for hyperbolic 3-manifolds  computed in Refs.~\cite{Inoue:1998nz, Cornish1999} are easily compatible with the $d=3$ bound for larger $\lambda_1$.

More generally, we can look for an upper on $\lambda_{i+1}$ for a given value of $\lambda_i$ for any $i\geq 1$. For candidate eigenvalues $(\lambda_i^*, \lambda_{i+1}^*)$, we look for an $\vec{\alpha} \in \mathbb{R}^n$ such that
\begin{align}
\vec{\alpha} \cdot \vec{F}_{0}(\lambda_i^*, 0) & =1, \\
\vec{\alpha} \cdot \vec{F}_{0}(\lambda_i^*, x) & \geq 0, \quad \forall x \in (0, \lambda_i^*] \cup [\lambda_{i+1}^*, \infty),  \\
\vec{\alpha} \cdot \vec{F}_{s}(\lambda_i^*, x) & \geq 0, \quad \forall x \geq s+d-2, \quad s=2, 4, \dots, \Lambda/2,
\end{align} 
which can again be formulated as a semidefinite program.
It turns out that the resulting bounds are identical to those in Fig.~\ref{fig:eigenvalue-bounds-2} for the values of $d$ and $\lambda_1$ shown, i.e, we can replace $\lambda_1 \rightarrow \lambda_i$ and $\lambda_2 \rightarrow \lambda_{i+1}$ in Fig.~\ref{fig:eigenvalue-bounds-2} for any integer $i\geq 1$. Since the upper bounds grow monotonically, they are most constraining for low-lying eigenvalues.

\begin{figure}[h!t]
\begin{center}
\epsfig{file=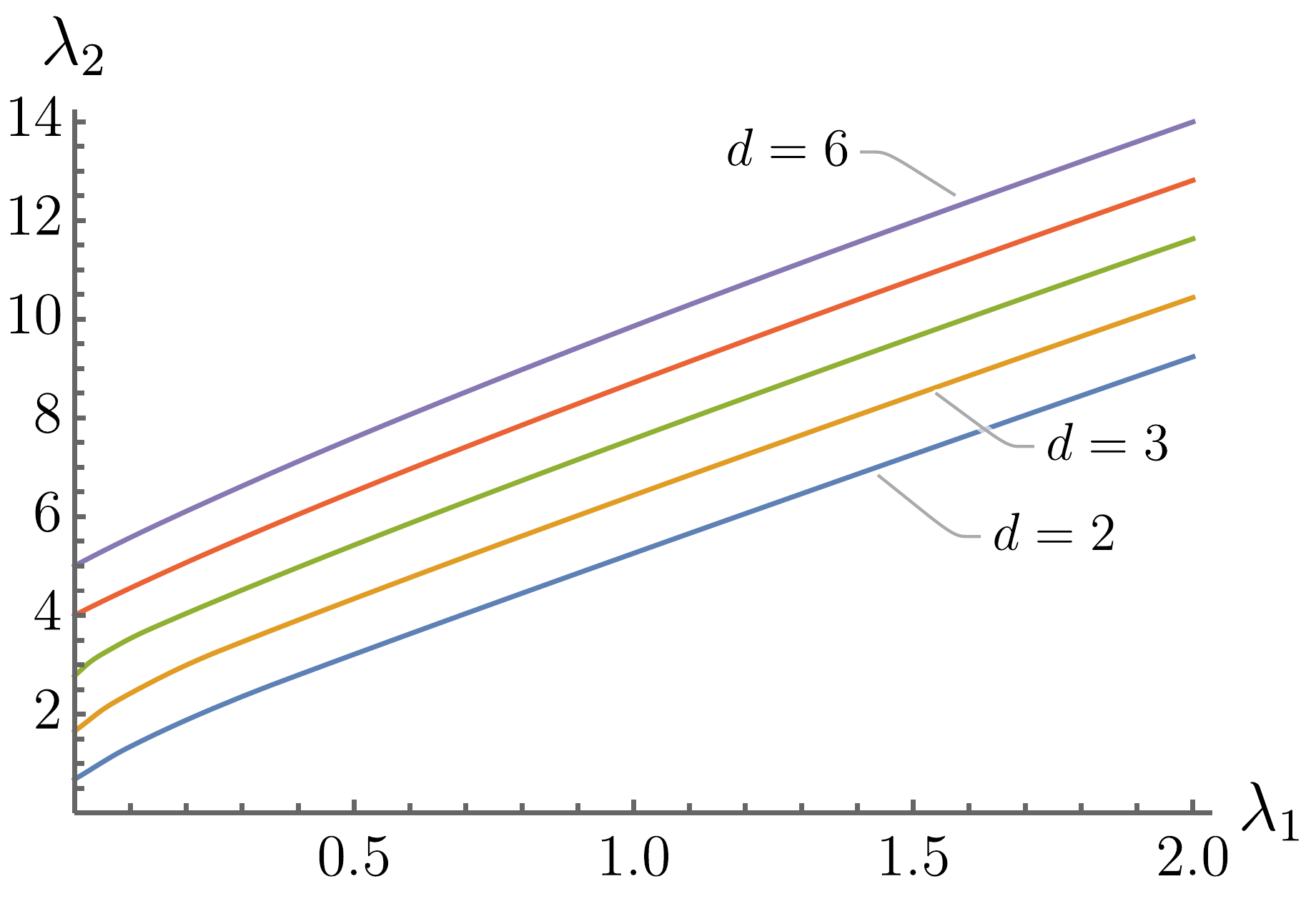, width=9cm}
\caption{Upper bounds on  the second nonzero eigenvalue of the Laplace--Beltrami operator, in terms of the first nonzero eigenvalue, on closed hyperbolic manifolds of dimension $d\in\{2,3,4,5,6\}$. The same bounds apply for any pair of consecutive nonzero eigenvalues.}
\label{fig:eigenvalue-bounds-2}
\end{center}
\end{figure}

We can use the same approach to find upper bounds on the smallest eigenvalue of the rough Laplacian on symmetric, transverse-traceless, rank-$s$ tensors for $d\geq 3$. We show these bounds for $s=2$ and $d=3, 4, 5 ,6$ in Fig.~\ref{fig:spin-2-bound}. These bounds have a nontrivial shape: they increase monotonically for $\lambda_1 \leq 3(d-1)^2/16$, then there is no bound for some range of $\lambda_1$, and beyond this the bounds are finite but much larger than the bounds at small $\lambda_1$. 

\begin{figure}[h!t]
\begin{center}
\epsfig{file=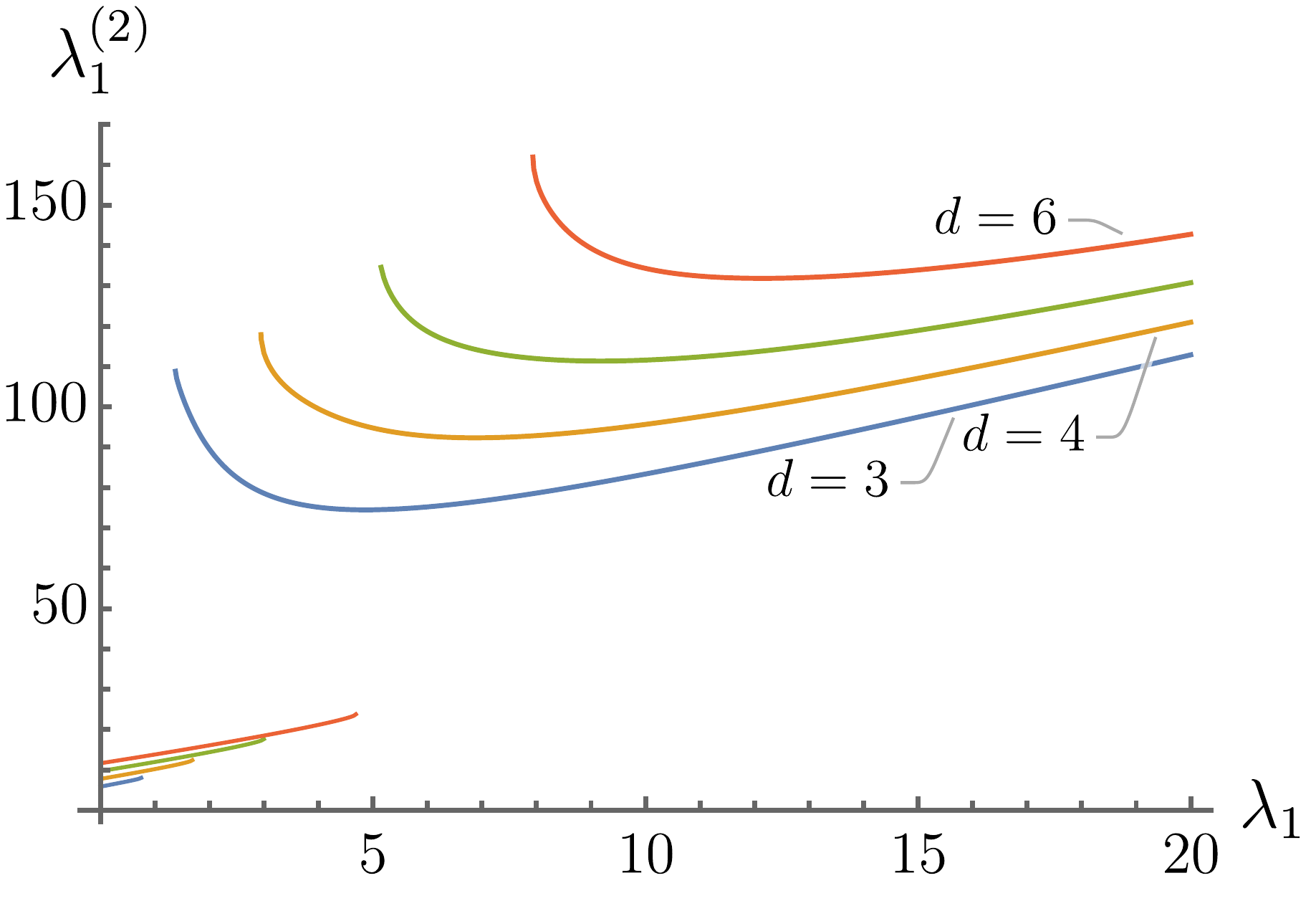, width=9cm}
\caption{Upper bounds on the smallest eigenvalue of the rough Laplacian on symmetric, transverse-traceless, rank-$2$ tensors for closed hyperbolic manifolds of dimension $d\in\{3,4,5,6\}$.}
\label{fig:spin-2-bound}
\end{center}
\end{figure}

\section{Bounds on triple overlap integrals}
\label{sec:overlap-bounds} 

We have so far discussed bounds on eigenvalues, but the consistency conditions can also be used to find bounds on integrals of products of eigenfunctions on hyperbolic manifolds. There is interest in analytic number theory in such overlap integrals and bounding their asymptotic growth due to their relation with automorphic $L$-functions in the case of certain non-compact manifolds \cite{Sarnak94, Petridis95, Bernstein1999, Bernstein2006}. 
The triple overlap integrals $c_{iij}$ decay faster than any polynomial in $\lambda_j$ due to the smoothness of the eigenfunctions, which guarantees the convergence of the sums in the consistency conditions. 
An explicit bound due to Sarnak \cite{Sarnak94} states that on a given hyperbolic manifold there are constants $A_i$ and $B$ such that
\be
|c_{iij}| \leq A_i (\lambda_j+1)^{B} e^{- \pi \sqrt{\lambda_j}/2}, \quad \forall j \in \mathbb{Z}_{>0}, 
\ee
where for $d=3$ we can take $B = 3/2$ \cite{Sarnak94}. In two dimensions, the best result was given by Bernstein and Reznikov \cite{Bernstein1999} and implies that on a given manifold there are constants $A_i$ such that\footnote{We have shifted the argument of the log compared to Ref.~\cite{Bernstein1999}.} 
\be \label{eq:Bernstein-Reznikov}
|c_{iij}| \leq A_i \left( \log (\lambda_j+1) \right)^{3/2} e^{- \pi \sqrt{\lambda_j}/2}, \quad \forall j \in \mathbb{Z}_{>0}.
\ee
These bounds apply to compact  hyperbolic manifolds and, for appropriate eigenfunctions, to non-compact, finite volume hyperbolic manifolds.
In this section, rather than bounding the asymptotic decay of triple overlap integrals, we instead find numerical bounds on triple overlap integrals of the lightest eigenfunction either with itself, with holomorphic $s$-differentials, or with moderately light eigenfunctions. We also show that some of these bounds are nearly saturated by genus-2 surfaces.

\subsection{Integrals of products of eigenfunctions}
\label{ssec:cubic-bounds}
We follow the conformal bootstrap approach for bounding the operator product expansion (OPE) coefficients of light states \cite{Caracciolo:2009bx}. We wish to find an upper bound on $\sqrt{V} |c_{111}|$, the rescaling-invariant combination of the volume and the triple overlap integral of the lightest non-constant eigenfunction. The approach is again to try to find contradictions by considering specific linear combinations of the consistency conditions.
Taking $\lambda_1 = \lambda_1^*$, we look for an $\vec{\alpha}$ satisfying the conditions
\begin{align}
\vec{\alpha} \cdot \vec{F}_{0}(\lambda_1^*, \lambda^*_1) & =1, \\
\vec{\alpha} \cdot \vec{F}_{0}(\lambda_1^*, x) & \geq 0, \quad \forall x \in  [\lambda_1^*, \infty), \\
\vec{\alpha} \cdot \vec{F}_{s}(\lambda_1^*, x) & \geq 0, \quad \forall x \geq s+d-2, \quad s=2, 4, \dots, \Lambda/2.
\end{align} 
It then follows from Eq.~\eqref{eq:sumRulesContracted}  that
\be 
V c^2_{111} \leq V \sum_{ \substack{\lambda_j = \lambda_1}} c_{11j}^2  \leq - \chi_{0, \Lambda} (\lambda_1) \vec{\alpha} \cdot \vec{F}_{0}(\lambda_1^*, 0),
\ee
where the sum is over a basis of the first nontrivial eigenspace.
To find the best upper bound, we maximise $\vec{\alpha} \cdot \vec{F}_{0}(\lambda_1^*, 0)$ subject to these constraints, which is a problem that can be solved using semidefinite programming \cite{Poland:2011ey}. We show the results for $d=2, \dots, 6$ in Fig.~\ref{fig:coupling-bound-1}. The bounds all diverge below some value of $\lambda_1$. None of the genus-2 surfaces that we checked were close to saturating the $d=2$ bound.

\begin{figure}[h!t]
\begin{center}
\epsfig{file=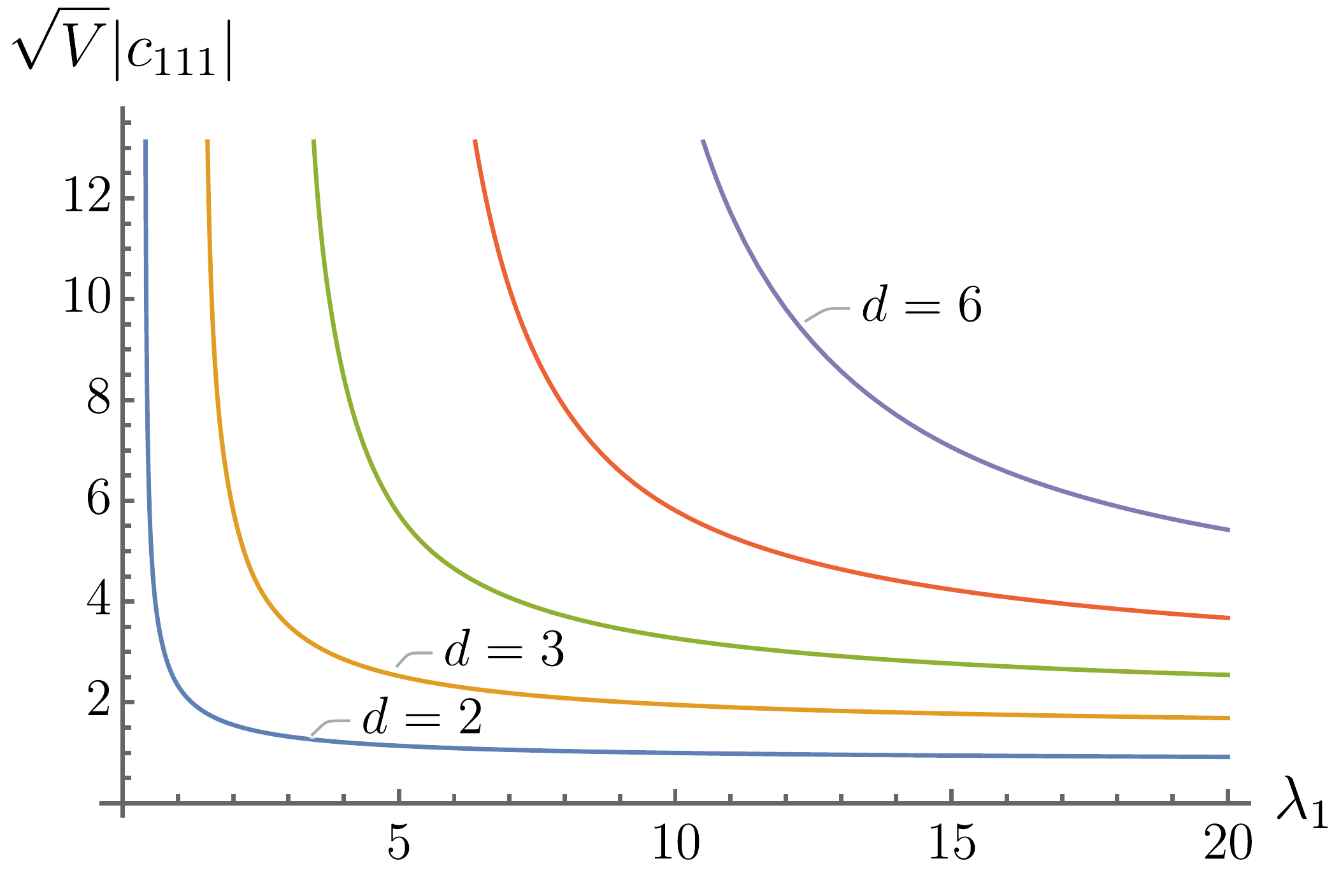, width=9cm}
\caption{Upper bounds on the triple overlap integrals of the lightest nontrivial scalar eigenfunctions on closed hyperbolic manifolds of dimension $d\in\{2,3,4,5, 6\}$.}
\label{fig:coupling-bound-1}
\end{center}
\end{figure}

\subsection{Eigenfunctions and holomorphic \texorpdfstring{$s$}{TEXT}%
-differentials}
We can also find bounds for overlap integrals involving transverse-traceless tensors. Here we only consider $d=2$. For a closed hyperbolic surface of genus $g$, it is notationally convenient to think of $c_{ijk}^{(s)}$ for $s>1$ as the $k$\textsuperscript{th} component of a finite-dimensional vector $\vec{c}_{ij}^{\, (s)} $ with norm given by
\be
 \lVert \vec{c}_{ij}^{\, (s)} \rVert= \sqrt{\sum_{k=1}^{2(2s-1)(g-1)} \left(c^{(s)}_{ijk}\right)^2}.
\ee
Recall that $c_{ijk}^{(s)}$ can be written in terms of the real part of the triple overlap integral of eigenfunctions and holomorphic $s$-differentials, as in Eq.~\eqref{eq:overlap2D}.
In this notation, the $d=2$ consistency conditions are written as
\be \label{eq:2DSumRules}
V^{-1} \vec{\alpha} \cdot \vec{F}_{0}(\lambda_i, 0) +\sum_{j=1}^{\infty} \frac{ \vec{\alpha} \cdot \vec{F}_{0}(\lambda_i,\lambda_j)}{\chi_{0, \Lambda} (\lambda_j)} c_{i i j}^2 + \sum_{\substack{s=2 \\ s \, \text{even}}}^{\Lambda/2} \frac{ \vec{\alpha} \cdot \vec{F}_{s}(\lambda_i,s)}{\chi_{s, \Lambda} ( s)}  \lVert \vec{c}_{ii}^{\, (s)} \rVert^2 =0.
\ee

Now suppose we want to look for bounds on the quantity $\sqrt{V}  \lVert \vec{c}_{11}^{\, (s')} \rVert$ for some $s'>1$. Taking $\lambda_1 = \lambda_1^*$, we look for an $\vec{\alpha}$ such that
\begin{align}
\vec{\alpha} \cdot \vec{F}_{s}(\lambda_1^*, s') & = \delta, \\
\vec{\alpha} \cdot \vec{F}_{0}(\lambda_1^*, x) & \geq 0, \quad \forall x \in  [\lambda_1^*, \infty), \\
\vec{\alpha} \cdot \vec{F}_{s}(\lambda_1^*, s) & \geq 0, \quad s \in \{2, 4, \dots, \Lambda/2 \} \setminus \{s'\}, 
\end{align} 
where $\delta = 1$ for upper bounds and $\delta=-1$ for lower bounds. For $\delta =1$ the consistency conditions in Eq.~\eqref{eq:2DSumRules} give the inequality
\be  
V \lVert \vec{c}_{11}^{\, (s')} \rVert^2  \leq - \chi_{s', \Lambda} ( s')\vec{\alpha} \cdot \vec{F}_{0}(\lambda_1^*, 0)
\ee
and for $\delta =-1$ they give
\be
V \lVert \vec{c}_{11}^{\, (s')} \rVert^2 \geq \chi_{s', \Lambda} ( s') \vec{\alpha} \cdot \vec{F}_{0}(\lambda_1^*, 0) .
\ee
In each case, we find the strongest bound by searching for an $\vec{\alpha}$ subject to the above constraints that maximises $\vec{\alpha} \cdot \vec{F}_{0}(\lambda_1^*, 0)$.

We show in Figs.~\ref{fig:spin-2-coupling-bound-2d} and \ref{fig:spin-4-coupling-bound-2d} the bounds we get on $\sqrt{V} \lVert \vec{c}_{11}^{\, (s)} \rVert$ for $s=2$ and $4$, together with the values for some explicit genus-2 surfaces along the path $\rho$ through Teichm\"uller space, defined in Eq.~\eqref{eq:moduli-path}. We explain  below  how to compute the overlap integrals for these surfaces. The upper bounds have similar shapes and diverge for $\lambda_1 \approx 0.399$, a number that we expect to decrease as $\Lambda$ is increased.  For $s=2$, there is a nontrivial lower bound, which exists for $\lambda_1 \leq 3/16$. The appearance of $3/16$ here can be traced back to the zero of $\vec{F}_0$ in Eq.~\eqref{eq:3/16zero}. Some of the explicit examples get quite close to saturating the bounds, e.g., the genus-2 surface $\rho(5)$ has $\sqrt{V} \lVert \vec{c}_{11}^{\, (2)} \rVert \approx  0.865$ and $\lambda_1 \approx 0.948$, while the bound at this value of $\lambda_1$ is $\sqrt{V} \lVert \vec{c}_{11}^{\, (2)} \rVert \leq  0.871$.  

\begin{figure}[ht!]
\begin{center}
\epsfig{file=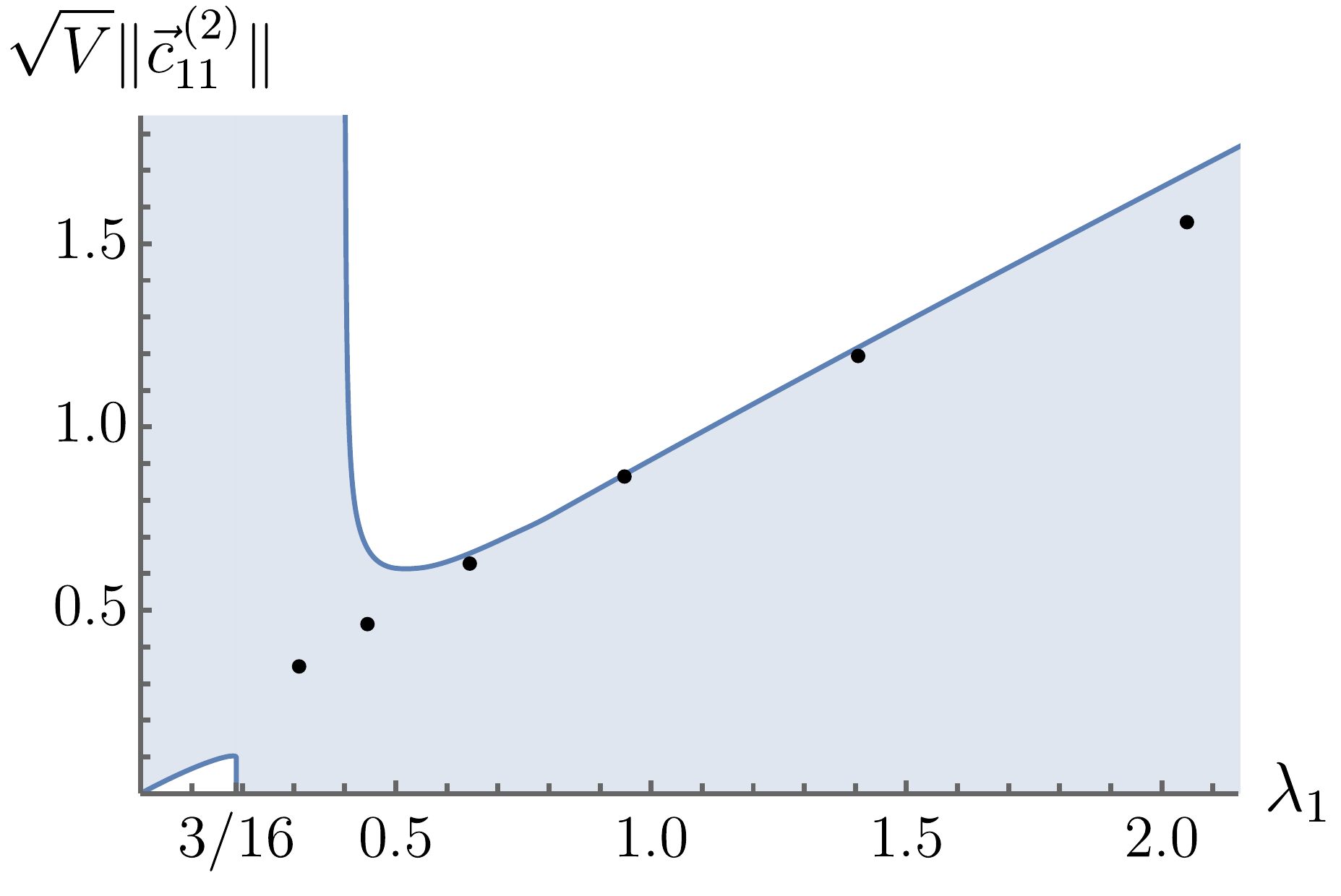, width=9cm}
\caption{Upper and lower bounds on the real parts of the overlap integrals of the lightest nontrivial eigenfunctions with holomorphic quadratic differentials on closed hyperbolic surfaces, as measured by $\sqrt{V} \lVert \vec{c}_{11}^{\, (2)} \rVert$. The blue region is allowed by the bounds. The filled circles correspond to the genus-2 surfaces $\rho(\ell)$, defined in Eq.~\eqref{eq:moduli-path}, for $\ell =3, \dots,8$, going from right to left.}
\label{fig:spin-2-coupling-bound-2d}
\end{center}
\end{figure}

\begin{figure}[h!t]
\begin{center}
\epsfig{file=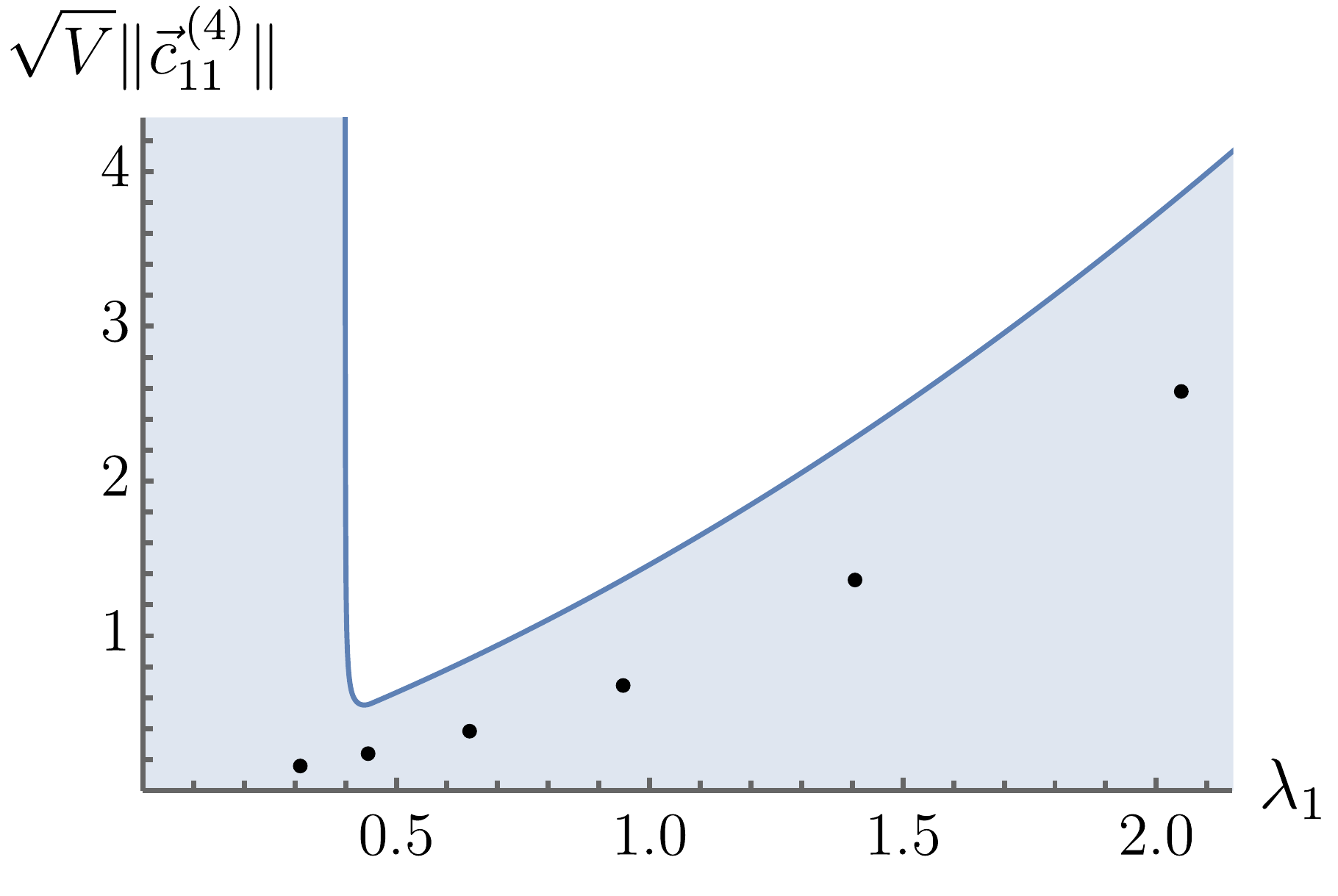, width=9cm}
\caption{An upper bound on $\sqrt{V} \lVert \vec{c}_{11}^{\, (4)} \rVert$ for closed hyperbolic surfaces. The filled circles correspond to the genus-2 surfaces $\rho(\ell)$, defined in Eq.~\eqref{eq:moduli-path},  for $\ell =3, \dots ,8$, going from right to left.}
\label{fig:spin-4-coupling-bound-2d}
\end{center}
\end{figure}

\subsubsection{Genus-2 examples}
We now explain how to find the genus-2 overlap integrals $\lVert \vec{c}_{11}^{\, (s)} \rVert$ included in Figs.~\ref{fig:spin-2-coupling-bound-2d} and \ref{fig:spin-4-coupling-bound-2d}. Rather than directly compute these integrals, we instead compute the scalar overlap integrals $c_{11j}$, plug these into the consistency conditions \eqref{eq:2DSumRules}, and then solve for $\lVert \vec{c}_{11}^{\, (s)} \rVert$.\footnote{In two dimensions, there are always two consistency conditions with contributions from the same maximum value of $s$, so verifying that they give the same value for $\lVert \vec{c}_{11}^{\, (s)} \rVert$ is a nontrivial check of the consistency conditions. 
} To evaluate the integrals $c_{11j}$ for a given genus-2 surface, we use the fundamental domain and Fourier coefficients computed by \texttt{Hypermodes} \cite{hypermodes, Strohmaier_2012}.  From the Fourier coefficients, we can construct approximations to the eigenfunctions as linear combinations of eigenfunctions on a hyperbolic cylinder \cite{Strohmaier_2012}, which can then be numerically integrated over the fundamental domain in the upper half-plane. 

It is important to estimate the accuracy of the resulting determinations of $\lVert \vec{c}_{11}^{\, (s)} \rVert$. The most significant contributions to the errors are the following: 
\begin{itemize}
\item We can only evaluate finitely many of the overlap integrals $c_{11j}$. In practice, we restrict to eigenfunctions with $\lambda_j \leq \lambda_{\text{max}}$, where $\lambda_{\text{max}}$ is some cutoff. To estimate the truncation error from taking $ \lambda_{\rm max}<\infty$, we use the bound \eqref{eq:Bernstein-Reznikov} of Ref.~\cite{Bernstein1999} to make the following nonrigorous estimate:
\be
\left| \sum_{\lambda_j>\lambda_{\rm{max}}} \frac{\vec{F}_{0}(\lambda_1,\lambda_j)}{\chi_{0, \Lambda} (\lambda_j)} c_{11j}^2  \right| \lesssim \frac{A^2}{2} \int_{\lambda_{\rm max}}^{\infty} d \lambda \left| \frac{\vec{F}_{0}(\lambda_1,\lambda)}{\chi_{0, \Lambda} (\lambda)} \right|  \left( \log (\lambda+1) \right)^{3} e^{- \pi \sqrt{\lambda}},
\ee
where the eigenvalue density comes from Weyl's law, with an extra factor of $1/2$ since only $\mathbb{Z}_2$-even eigenfunctions can have nonzero $c_{11j}$, and $A$ is estimated from the explicitly computed integrals. We found that $A \in [2,6]$ gives a good upper bound for our examples (see Fig.~\ref{fig:heavy-spin-0-exchange}).
\item The accuracy of the eigenfunction approximation. This is determined by a parameter called $N$ in \texttt{Hypermodes}, where the genus-2 eigenfunctions are approximated by linear combinations of $2(2N+1)$ eigenfunctions on a hyperbolic cylinder. It is also important that enough points are sampled by \texttt{Hypermodes} when it normalises the eigenfunctions. We checked explicitly that the reconstructed eigenfunctions were normalised to high accuracy.
\item The accuracy of the integrals $c_{11j}$ for $\lambda_j \leq \lambda_{\text{max}}$. We numerically integrated using the function \texttt{NIntegrate} in \texttt{Mathematica} with the option $\text{AccuracyGoal} \rightarrow a$. This means that the target error in the integrals is $10^{-a}$. We took this as the actual error in the nonzero $c_{11j}$, although this is sometimes an underestimate.
\end{itemize}
To find sufficiently accurate estimates of $\lVert \vec{c}_{11}^{\, (2)} \rVert$ and $\lVert \vec{c}_{11}^{\, (4)} \rVert$, we take $\lambda_{\text{max}} = 80 $, $a \in \{ 5, 6\}$, and $N \in \{ 30, 40, 50\}$, depending on the surface.  The resulting error estimates for the examples in Figs.~\ref{fig:spin-2-coupling-bound-2d} and \ref{fig:spin-4-coupling-bound-2d} are all smaller than the size of the plot markers and can be found in the ancillary file, together with the values of $c_{11j}$. We also show in Fig.~\ref{fig:heavy-spin-0-exchange} a log plot with the largest overlap integrals $|c_{11j}|$ of the genus-2 surface $\rho(5)$, together with a bootstrap bound on $\sqrt{V}|c_{11j}|$ for $\lambda_1 = 0.9477$ and the Bernstein--Reznikov bound  \eqref{eq:Bernstein-Reznikov} with $A_1=3$. The surface $\rho(5)$ has $\lambda_4\approx4.96$ and $\sqrt{V}|c_{114}| \approx0.582$, while the bootstrap upper bound at this eigenvalue is $0.602$.
\begin{figure}[h!t]
\begin{center}
\epsfig{file=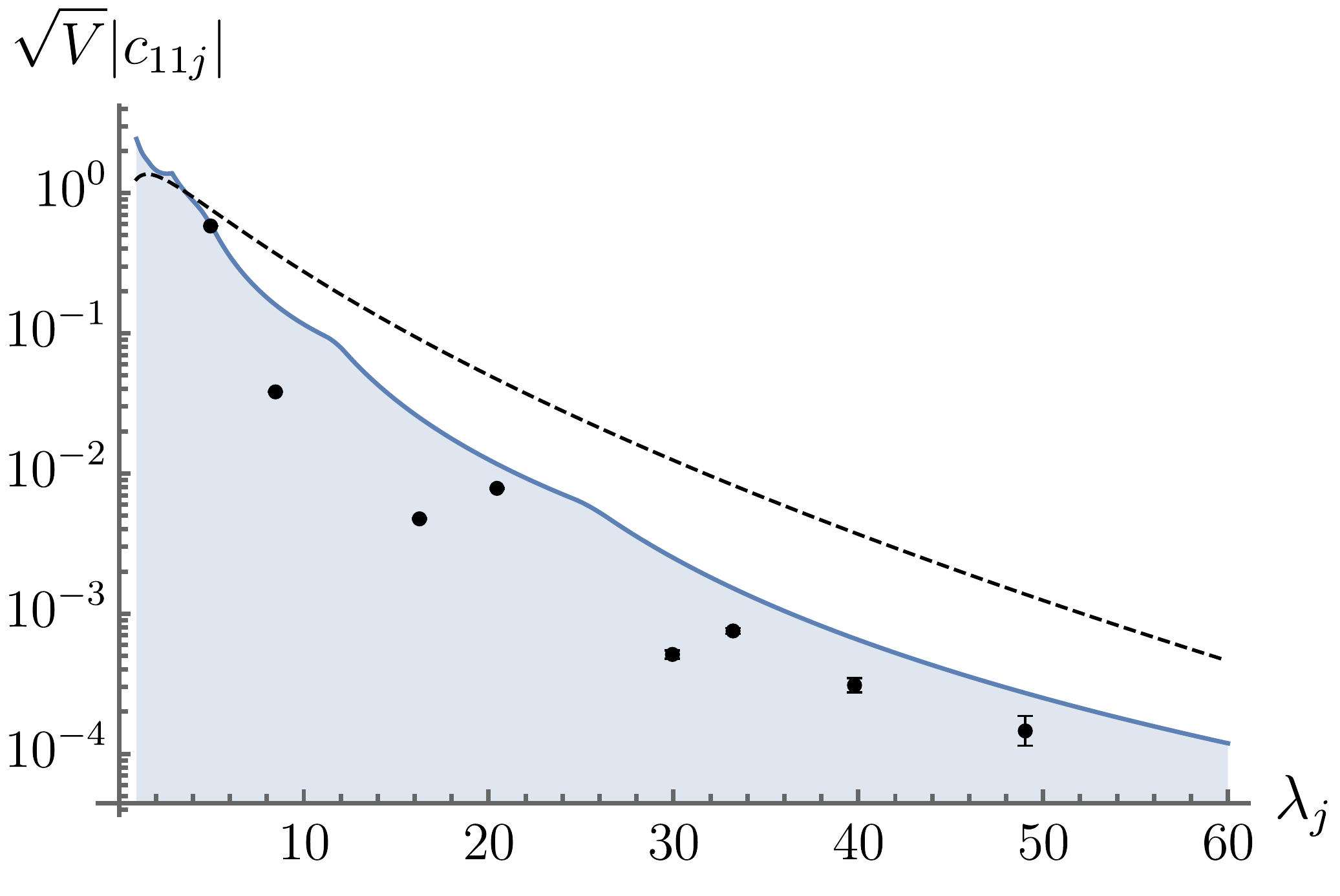, width=8.425cm}
\caption{An upper bound on $\sqrt{V} |c_{11j}|$ for closed hyperbolic surfaces with $\lambda_1 = 0.9477$. The filled circles represent the eigenfunctions $\phi_j$ of the genus-2 surface $\rho(5)$, defined in Eq.~\eqref{eq:moduli-path}, with $j\in\{4, 7, 16, 18, 29, 33, 39, 48\}$. The dashed line is the Bernstein--Reznikov bound \eqref{eq:Bernstein-Reznikov} with $A_1=3$, using $V=4 \pi$.}
\label{fig:heavy-spin-0-exchange}
\end{center}
\end{figure}

\section{Discussion}
\label{sec:discussion}
The application of bootstrap methods to hyperbolic manifolds could be extended in many directions. For many conformal bootstrap ideas, there is an analogue for closed hyperbolic manifolds, such as including multiple external states \cite{Kos:2014bka}, using external states with spin \cite{Iliesiu:2015qra, Dymarsky:2017xzb}, and assuming discrete global symmetries \cite{ElShowk:2012ht, Rong:2017cow, Baggio:2017mas}. The rigidity of hyperbolic 3-manifolds would make these good targets to try to isolate using a multi-scalar bootstrap. Optimistically, one could hope that bootstrap methods might even be useful for classifying hyperbolic manifolds.

The single scalar bootstrap of this paper could also be pushed much further. The current limiting factor is the difficulty of finding consistency conditions from integrands with many derivatives. In the conformal bootstrap, the bootstrap equation is a functional equation depending on the conformal cross-ratios, and discrete consistency conditions are usually extracted by doing a Taylor series expansion around a point. Here we instead constructed the consistency conditions separately at each derivative order. The effort required with this approach grows rapidly with the number of derivatives, so having a functional constraint would be very useful. Perhaps the mysterious zero in Eq.~\eqref{eq:3/16zero} is a hint that there is a better way to derive the consistency conditions.
However, we expect that further progress could be made even with the current approach, especially in two dimensions by further exploiting the complex structure.

It would be helpful to clarify the relationship between this geometric bootstrap and the conformal bootstrap, both for conceptual and technical reasons. 
The spectrum of a CFT is generally much more complicated than that of a manifold, but perhaps some limit of the conformal bootstrap gives geometric bounds, for example, through 2D $\sigma$-model CFTs (see, e.g., Refs.~\cite{Keller:2012mr, Lin:2015wcg}).\footnote{We thank Xi Yin for comments on this.} 
For hyperbolic manifolds, the equivalence of the symmetries of  $\mathbb{H}^{d+1}$ and Euclidean CFT$_d$ is also suggestive.
For Ricci-flat manifolds, there is an interpretation of the consistency conditions as sum rules for graviton amplitudes \cite{Bonifacio:2019ioc}. Perhaps there is a higher-spin analogue of this interpretation for hyperbolic manifolds.  

Lastly, it would be interesting if bootstrap bounds could be derived for non-compact finite volume hyperbolic manifolds, such as modular curves and knot complements. This would be somewhat analogous to the conformal bootstrap for non-compact CFTs \cite{guillarmou2020}. A non-compact bootstrap would require more complicated spectral decompositions, such as including Eisenstein series for the continuous part of the spectrum. In certain cases, bounds on overlap integrals would then translate to bounds on automorphic $L$-functions \cite{Sarnak94, Petridis95, Bernstein1999, Bernstein2006}. 

\paragraph{Acknowledgements:} I would like to thank Austin Joyce and Xi Yin for helpful comments and Kurt Hinterbichler for many helpful discussions and previous collaborations related to this topic. During this work I made use of the \texttt{Mathematica} package \texttt{xAct} \cite{xAct}. I am supported by the research program VIDI with Project No. 680-47-535, which is partly financed by the Netherlands Organisation for Scientific Research (NWO). This work has also been partially supported by STFC HEP consolidated grants ST/P000681/1 and ST/T000694/1.

\appendix
\section{Transverse-traceless tensor decomposition}
\label{app:ttDecomposition}
In this appendix, we prove the existence of a transverse-traceless decomposition for symmetric tensors of arbitrary rank on a closed $d$-dimensional Riemannian manifold $(\mathcal{M}, \hat{g})$. We could not find an explicit statement of this decomposition elsewhere, but it essentially follows from arguments in Refs.~\cite{Pestov-Sharafutdinov, Dairbekov2011, York1973, York1974}, which we follow closely.

Fix $s\geq 2$. Let $S^{s}(\mathcal{M})$ denote the space of smooth, symmetric, rank-$s$ tensor fields on $(\mathcal{M}, \hat{g})$ and let $S_0^{s}(\mathcal{M}) \subset S^{s}(\mathcal{M})$ denote the subspace of smooth, symmetric, traceless tensor fields.  Given a symmetric tensor $T^{(s)} \in S^{s}(\mathcal{M})$, we want to show that it can be uniquely decomposed into the sum of a symmetric, transverse-traceless tensor, the symmetrised traceless derivative of a traceless tensor, and a trace term. In components, this decomposition is  given by
\be \label{eq:ttDecomposeAppendix}
T^{(s)}_{m_1 \dots m_s} = T^{(s)TT}_{m_1 \dots m_s} + \nabla_{(m_1} W^{(s-1)}_{m_2 \dots m_s)}- \frac{s-1}{2s+d-4}\hat{g}_{(m_1 m_2} \nabla^{n}  W^{(s-1)}_{m_3 \dots m_s )n}+\hat{g}_{(m_1 m_2} \bar{T}^{(s-2)}_{m_3 \dots m_s)},
\ee
where $\nabla$ is the covariant derivative, $T^{(s)TT} \in S_0^{s}(\mathcal{M})$ is transverse, $ W^{(s-1)}\in S_0^{s-1}(\mathcal{M})$, and $\bar{T}^{(s-2)}\in S^{s-2}(\mathcal{M})$. We symmetrise with weight one. The different parts of this decomposition are orthogonal with respect to the $L^2$ inner product on symmetric tensors.  
For $s=2$, this is the familiar transverse-traceless decomposition of a rank-$2$ tensor, which is useful for studying the initial value formulation of general relativity \cite{York1973, York1974} and, in two dimensions, for calculating perturbative string amplitudes \cite{dHokerPhong88}. 

The tensor $\bar{T}^{(s-2)}$ can be expressed in terms of traces of $T^{(s)}$, while the symmetric, transverse-traceless tensor $T^{(s)TT}$ is defined by  Eq.~\eqref{eq:ttDecomposeAppendix}. We therefore only need to show that $W^{(s-1)}$ always exists and is unique up to terms that drop out of the decomposition. Taking the divergence of Eq.~\eqref{eq:ttDecomposeAppendix}, we define $W^{(s-1)}$ as the solution to the following equation:
\be
\label{eq:2ndOrder}
(P^\dagger_{s-1} P_{s-1}W^{(s-1)})_{m_2 \dots m_s}= -\nabla^{m_1} \left(T^{(s)}_{m_1 \dots m_s} - \hat{g}_{(m_1 m_2} \bar{T}^{(s-2)}_{m_3 \dots m_s)}\right),
\ee
where the operator $P_{s-1}$ is the symmetrised traceless derivative defined by
\be \label{eq:PsDef}
P_{s-1}: \,  S_0^{s-1}(\mathcal{M}) \rightarrow S_0^{s}(\mathcal{M}), \quad W^{(s-1)}_{m_1 \dots m_{s-1}} \mapsto  \nabla_{(m_1} W^{(s-1)}_{m_2 \dots m_s)}- \frac{s-1}{2s+d-4}\hat{g}_{(m_1 m_2} \nabla^n W^{(s-1)}_{m_3 \dots m_s) n},
\ee
and its formal adjoint is minus the divergence,
\be
P^{\dagger}_{s-1}: \,  S_0^{s}(\mathcal{M}) \rightarrow S_0^{s-1}(\mathcal{M}), \quad X^{(s)}_{m_1 \dots m_{s}} \mapsto  -\nabla^{m_s} X^{(s)}_{m_1 \dots m_s}.
\ee
In two dimensions, $P_{s-1}$ is the same as the operator $P_{s-1}=\nabla^z_{s-1}\oplus \nabla_z^{-s+1}$ of Ref.~\cite{Alvarez:1982zi}.  

We want to show that the operator $P^\dagger_{s-1} P_{s-1}$ is elliptic, since then we can invert it  in Eq.~\eqref{eq:2ndOrder} to solve for $W^{(s-1)}$.
Define the operator $j^{(s-1)}_{\xi} $ that contracts a symmetric, traceless, rank-$s$ tensor at a point with the constant vector $\xi$,
\be
j^{(s-1)}_{\xi} : \,  V_{m_1 \dots m_s} \mapsto  \xi^{m_s} V_{m_1 \dots m_s}, 
\ee
and its adjoint $i_{\xi}^{(s-1)}$,
\be
i_{\xi}^{(s-1)}: \, U_{m_1 \dots m_{s-1}} \mapsto \xi_{(m_1} U_{m_2 \dots m_s)} - \frac{s-1}{2s+d-4}\hat{g}_{(m_1 m_2} \xi^n U_{m_3 \dots m_s)n}.
\ee
The principal symbol of $P_{s-1}$ is $ \sigma(P_{s-1} , \xi)  = i \,i_{\xi}^{(s-1)}$, so the principal symbol of $P^\dagger_{s-1} P_{s-1}$  is
\begin{align}
\sigma(P^\dagger_{s-1} P_{s-1} , \xi) & =\sigma(P^\dagger_{s-1}, \xi) \sigma(P_{s-1} , \xi)=\sigma(P_{s-1}, \xi)^\dagger \sigma(P_{s-1} , \xi)  \\
& = j_{\xi}^{(s-1)} i^{(s-1)}_{\xi} = \frac{|\xi|^2}{s} {\rm Id}^{(s-1)} + \frac{(s-1)(2s+d-6)}{s(2s+d-4)}i_{\xi}^{(s-2)} j_{\xi}^{(s-2)},
\end{align}
where $|\xi|^2 = \xi_m \xi^m$ and ${\rm Id}^{(s-1)}$ is the identity.
Since $i_{\xi}^{(s-2)} j_{\xi}^{(s-2)}$ is non-negative and $|\xi|^2\, {\rm Id}^{(s-1)} $ is positive for nonzero $\xi$, the principal symbol $\sigma(P^\dagger_{s-1} P_{s-1} , \xi) $ defines an isomorphism for nonzero $\xi$. The operator $P^\dagger_{s-1} P_{s-1}$ is therefore strongly elliptic everywhere \cite{Dairbekov2011}.
The kernel of this operator consists of traceless conformal Killing tensors.  The right-hand side of Eq.~\eqref{eq:2ndOrder} is orthogonal to traceless conformal Killing tensors, so using the standard existence, uniqueness, and regularity theorems for the solutions of linear elliptic equations on closed manifolds---as summarised, for example, in the appendix of Ref.~\cite{Besse}---we can invert $P^\dagger_{s-1} P_{s-1}$ in Eq.~\eqref{eq:2ndOrder} to obtain a solution $W^{(s-1)}$ that is smooth and unique up to traceless conformal Killing tensors. Any traceless conformal Killing tensor contributions to $W^{(s-1)}$ drop out of Eq.~\eqref{eq:ttDecomposeAppendix}, so this establishes that this decomposition exists and is unique, and it is straightforward to check that it is orthogonal. Note that there exist no nonzero traceless conformal Killing tensors on closed hyperbolic manifolds \cite{Dairbekov2011}. 

\renewcommand{\em}{}
\bibliographystyle{utphys}
\addcontentsline{toc}{section}{References}
\bibliography{hyperbolic-refs}

\end{document}